\documentclass[11pt]{article}
\usepackage[margin=1in]{geometry}
\usepackage{amsfonts}
\usepackage{amsmath,amsthm,amssymb}
\usepackage{latexsym}
\usepackage{epic}
\usepackage{epsfig}
\usepackage{hyperref}
\usepackage{verbatim}
\usepackage{enumitem}
\usepackage[justification=centering]{caption}
\usepackage{enumitem}
\usepackage{color}
\allowdisplaybreaks[1]

\newtheorem{theorem}{Theorem}[section]
\newtheorem{corollary}[theorem]{Corollary}
\newtheorem{lemma}[theorem]{Lemma}

\newtheorem{claim}[theorem]{Claim}
\newtheorem{definition}[theorem]{Definition}

\newtheorem*{conjecture*}{Conjecture}
\newtheoremstyle{nonindented}{1ex}{1ex}{}{}{\bfseries}{.}{.5em}{}
\newtheoremstyle{indented}{1ex}{1ex}{\itshape\addtolength{\leftskip}{0.6cm}\addtolength{\rightskip}{0.6cm}}{}{\bfseries}{.}{.5em}{}
\theoremstyle{nonindented}
\theoremstyle{indented}
\theoremstyle{plain}

\newcommand{\cross}{\times}

\newcommand{\set}[1]{\left\{ #1 \right\}}
\newcommand{\union}{\cup}
\newcommand{\intersect}{\cap}

\renewcommand{\tilde}{\widetilde}

\DeclareMathOperator{\poly}{poly}



\def\max{\qopname\relax n{max}}

\def\argmax{\qopname\relax n{argmax}}

\def\Pr{\qopname\relax n{\mathbf{Pr}}}
\def\Ex{\qopname\relax n{\mathbf{E}}}

\newcommand{\RR}{\mathbb{R}}
\newcommand{\RRp}{\RR_+}

\def\A{\mathcal{A}}

\def\D{\mathcal{D}}

\def\F{\mathcal{F}}

\def\N{\mathcal{N}}
\def\P{\mathcal{P}}

\def\V{\mathcal{V}}

\def\eps{\epsilon}
\def\sse{\subseteq}

\newcommand{\eat}[1]{}






\newcommand{\maxi}[1]{\mbox{maximize} & {#1 } & \\}
\newcommand{\st}{\mbox{subject to} }
\newcommand{\con}[1]{&#1 & \\}
\newcommand{\qcon}[2]{&#1, & \mbox{for } #2.  \\}
\newenvironment{lp}{\begin{equation}  \begin{array}{lll}}{\end{array}\end{equation}}
\newenvironment{lp*}{\begin{equation*}  \begin{array}{lll}}{\end{array}\end{equation*}}

\author{Yu Cheng\footnotemark[3] \and Ho Yee Cheung\thanks{Supported in part by USC Provost's Ph.D. Fellowship and NSF CCF-0964481.} \and Shaddin Dughmi\thanks{Supported in part by NSF CAREER Award CCF-1350900.} \and Shang-Hua Teng\thanks{Supported in part by NSF CCF-0964481 and NSF CCF-1111270.}\\
Department of Computer Science\\ University of Southern California}
\title{Signaling in Quasipolynomial Time}
\date{}

\begin{document}

\maketitle
\thispagestyle{empty}
\begin{abstract}
Strategic interactions often take place in an environment rife with
  uncertainty. 
As a result, the equilibrium of a game is intimately
  related to the information available to its players. 
The \emph{signaling problem} abstracts the task faced by an informed
  ``market maker'', who must choose how to reveal information in order
   to effect a desirable equilibrium.

In this paper, we consider two fundamental signaling problems: 
  one for abstract normal form games, and the other for single item
  auctions. 
For the former, we consider an abstract class of objective
  functions which includes the social welfare and weighted combinations
  of players' utilities, and for the latter we restrict our attention to
  the social welfare objective and to signaling schemes which are
  constrained in the number of signals used. 
For both problems, we design approximation algorithms for
  the signaling problem which run in quasi-polynomial time under various
  conditions, extending and complementing the results of various recent
  works on the topic.

Underlying each of our results is a ``meshing scheme'' which
   effectively overcomes the ``curse of dimensionality''
   and discretizes the space of ``essentially different''
   posterior beliefs -- in the sense of inducing ``essentially
  different'' equilibria. 
This is combined with an algorithm for
  optimally assembling a signaling scheme as a convex combination of
  such beliefs. 
For the normal form game setting, the meshing scheme leads to a convex
  partition of the space of posterior beliefs and this assembly procedure
  is reduced to a linear program, and in the auction setting the
  assembly procedure is reduced to submodular function maximization.
\end{abstract}

\newpage
\setcounter{page}{1}
\section{Introduction}

\def\xx{\mathbf{x}}
\def\xxappr{\tilde{\mathbf{\xx}}}

\def\calG{\mathcal{G}}
\def\Nash{\mathrm{NashEquilibrium}}

In games with uncertainty, the
  information available to players 
  may influence their strategic decisions,
  and  could fundamentally impact their utilities and 
  their beliefs regarding the preferences of other players.
Consequently, a principal (market maker) who is privy to confidential state-of-nature information may affect the equilibrium 
  by strategically ``leaking'' information ahead of players'
  decisions --- a practice often referred to as \emph{signaling}.
If the principal's goal is to boost her own objective 
 in the outcome of the
  game, then her strategic decision regarding
 which information to reveal is an optimization problem,
 which we refer to as the {\em signaling problem}.

We consider {\em Bayesian games} 
  in which  players have {\em prior} beliefs
  on the structure of the game,  and 
  form  {\em posterior beliefs}  based on  
  the information revealed by the principal. We restrict our attention to games in which players share the same prior belief, and \emph{symmetric signaling schemes} which reveal the same information to all players.
Thus, a signaling scheme 
  defines a number of new Bayesian (sub-)games --- one for each realization of the information string (aka the \emph{signal}).
From this viewpoint, the search space for signaling
  is the space of decompositions of the prior belief into posterior beliefs.
The goal of the principal is then to identify a signaling scheme which induces, on average over the resulting Bayesian subgames, a favorable equilibrium.

We focus on two fundamental models previously considered in a signaling context. Our first model, considered in \cite{D14}, is the most abstract: normal form games in which the payoff matrix is parametrized by a state of nature, which is drawn from a common prior. We consider the signaling question in normal form games for an  abstract class of objectives which include the social welfare, weighted combinations of players' utilities, and utility functions which depend on players' actions in the various states of nature. Our second model, considered in \cite{emeksignaling,miltersensignals,DIR14} is that of a Bayesian second-price auction, in particular when there is uncertainty regarding the identity or attributes of the item for sale, and players' private valuations for the various configurations of the item are drawn from a distribution. We restrict attention to the social welfare objective in this model, yet as in \cite{DIR14} we place a bound on the communication bandwidth of the signaling scheme --- equivalently the number of different signals used to describe the item.

\subsection{Context and Results}

  There are significant computational challenges
  involved in obtaining an (approximately) optimal signaling scheme. Whereas the space of the posterior beliefs is a convex set, signaling problems in even very simple settings involve a non-convex objective, or  combinatorial constraints, which either lead to hardness results or require novel algorithmic approaches. This is borne out in a number of recent works which examine the algorithmic aspects of signaling, and we mention those most relevant to this paper next.

Signaling in the Bayesian second-price auction model has perhaps received the most attention from an algorithmic perspective. Emek et al.~\cite{emeksignaling} and Miltersen and Sheffet \cite{miltersensignals}, consider the signaling problem faced by an informed auctioneer who must decide to (partially) reveal information regarding the item for sale, with the goal of maximizing his revenue. Whereas the general problem is shown to be NP-hard, the special case in which players' valuation distribution has small support is solved in polynomial time. We also mention the work of Guo and Deligkas \cite{guo}, who consider a similar question albeit in a different model of uncertainty, and similarly present NP-hardness  results and heuristics for their model. 

More recently, Dughmi et al.~\cite{DIR14} consider the same auction setting, albeit with the welfare objective and subject to exogenous constraints on the signaling policy --- most notably a communication constraint. There the problem was shown to be NP-hard to approximate better than a $(1-1/e)$ factor, and approximation algorithms matching this guarantee were presented in the special case in which players' valuations are drawn from a small support distribution.  The present paper considers the general algorithmic question without  a bound on players' valuation distribution, while relaxing the runtime requirement to quasi-polynomial. Specifically, we derive a quasi-polynomial-time algorithm with an approximation ratio of $(1-1/e)$, modulo an additive loss of $\epsilon$ times the upper limit of a player's valuation. Our result holds when players' valuation distribution is given explicitly, and more generally whenever said distribution can be sampled efficiently.

The other signaling model considered in this paper, that of abstract normal form games, has only recently been examined from a complexity-theoretic perspective. Dughmi~\cite{D14} considered the special case of two-player zero-sum games, and examined the design of symmetric signaling schemes with the goal of maximizing the expected utility of one of the players. The results were mostly negative, assuming the conjectured hardness of the planted clique problem, Specifically, it was shown that no fully polynomial time approximation scheme (FPTAS) is possible for the signaling problem for zero sum games, assuming the planted clique conjecture. This leaves open the potential for a PTAS. In the present paper, we complement the impossibility result of \cite{D14} with a quasi-polynomial time approximation scheme (QPTAS), which applies more generally to general sum normal form games with a constant number of players, and an abstract class of objectives which includes the social welfare and weighted combinations of players' utilities as a special case. For general sum games, our result is a bicriteria result which forgoes the exact Nash equilibrium, and instead computes a signaling scheme with $\epsilon$-equilibria that are competitive with the $\epsilon'$-equilibria of every other signaling scheme whenever $\epsilon' < \epsilon$ (including $\epsilon' = 0$). 
Our result can also be extended to Stackelberg games.



\subsection{Our Techniques}

Both our results rely on a  ``meshing method'' applied to the space of posterior beliefs, or equivalently the equilibria which they induce. The method proceeds in two main steps.

\begin{enumerate}
\item {\em Construct a Small Dictionary of Equilibria}:  
This is a discrete family of objects which indexes the potential equilibria of a signaling scheme, with the property that they form an ``$\epsilon$-net'' of the space of all equilibria in a precise technical sense. This ``net'' forms our signaling ``dictionary''.
\item {\em Construct a near-optimal Signaling Scheme}:
 We then  assemble a near-optimal signaling scheme which induces subgames with equilibria from our dictionary.  This involves solving a nontrivial optimization problem which optimally decomposes the prior distribution into posterior beliefs inducing equilibria in our dictionary.
\end{enumerate}
The technical challenge involved in step (1) is the identification of a net of the space of equilibria which has ``low approximation dimensionality'' with respect to the space of signaling schemes and the design objective.
This arises due to the ``curse of dimensionality'' of the 
  space of posterior beliefs and the space of equilibria.  For step (2), the challenge arises from the fact that not every convex combination of posterior beliefs is a valid signaling scheme --- indeed, the signaling scheme must be a convex decomposition of the prior distribution. This induces a nontrivially constrained optimization problem.

For the results of Section \ref{sec:game} for normal form games, our dictionary is based on a Lemma of Lipton et al.~\cite{lmm03}. Specifically, their result implies the existence of a quasi-polynomial-sized family of mixed strategy profiles which, simultaneously for all games and equilibria of those games, includes a profile which approximates the payoffs of the equilibrium within an additive $\epsilon$, and itself forms an  $\epsilon$-equilibrium. To combine these approximate equilibria into a signaling scheme, we make two observations: First, the space of posterior beliefs inducing a particular equilibrium forms a convex polytope; Second, the optimization problem of optimally partitioning the prior belief into a quasi-polynomial number of posterior beliefs, one in each polytope corresponding to an equilibrium, can be formulated via a linear program after an appropriate change of variables.

For the results of Section \ref{sec:auction} for second-price auctions, constructing our dictionary takes more work. Specifically, we consider all rules for selecting a winner of the auction as a function of the drawn valuation matrix, and show that a quasi-polynomial number of such rules suffices to approximate the welfare in all equilibria to within an additive error of $\epsilon$. Assembling a signaling scheme which induces such winner selection rules in an optimal combination is then reduced to submodular function maximization subject to a cardinality constraint, which admits a $(1-1/e)$-approximation algorithm.

\subsection{Additional Discussion of Related Work}

The study of the effects of information on strategic interactions, and mechanisms for signaling, has its roots in the early works of Akerlof~\cite{akerloflemons} and Spence~\cite{spence1973job}. 
%
Hirshleifer~\cite{hirshleifer71} was the first to observe that optimal information revelation is nontrivial, in the sense that more information sometimes leads to worse market outcomes. This contrasts with earlier work by  Blackwell~\cite{blackwell51} which implied that more information is always better for a single agent in a non-competitive environment. Since then, many works have examined the effects of additional  information on players' equilibrium utilities.  Lehrer et al.~\cite{lehrermediation} showed that additional information improves players' utilities in \emph{common interest} games --- games where players have identical payoffs in each outcome, and  Bassan et al.~\cite{bassanpositive} exhibited a polyhedral characterization of games in which more information improves the individual utility of \emph{every} player.
The ``linkage principle'' of Milgrom and Weber~\cite{MW82} exhibits natural conditions under which revealing additional information to buyers in an auction improves the seller's revenue; and Syrgkanis et al.~\cite{CommonValue} examine failings of the linkage principle in common value auctions. Unlike revenue, it is known that additional information always improves welfare in a second-price auction.



Despite appreciation of the importance of information in strategic interactions, it is only recently that researchers have started viewing the information structure of a game as a mathematical object to be designed, rather than merely an exogenous variable. Kamenica and Gentzkow~\cite{bayesianpersuasion} examine settings in which a sender must design a signaling scheme to convince a less informed receiver to take a desired action. Recent work in the CS community, including by Emek et al.~\cite{emeksignaling}, Miltersen and Sheffet~\cite{miltersensignals}, and Guo et al.~\cite{guo}, examines revenue-optimal signaling in an auction setting, and presents polynomial-time algorithms and hardness results for computing it.  Dughmi et al.~\cite{DIR14} examine welfare-optimal signaling in an auction setting under exogenous constraints, and presents polynomial-time algorithms and hardness results under assumptions on players' value distributions; our auction signaling result applies to a generalization of their model. Also relevant to our work is the recent result of Dughmi~\cite{D14}, which essentially rules out an FPTAS for the signaling problem in zero-sum games; our result on general games is a positive counter-point to the result of \cite{D14}, slightly relaxing both the equilibrium definition and the polynomial-time restriction.

\section{Preliminaries: Information Revelation and Signaling Schemes}
\label{sec:prelims}
\subsection{Games}
\label{sec:games}

A \emph{Bayesian game} is a family of 
  \emph{games of complete formation} parametrized by a state of nature $\theta$,
  where $\theta$ is assumed to be drawn from a known prior
  distribution. 
We consider two classes of such games: \emph{explicit  normal form Bayesian games},
   and \emph{Bayesian single item auction games}.

\subsubsection*{Normal Form games}

We consider normal form games of incomplete information, 
  given by the following parameters.
\begin{itemize}
\item A positive integer $n$ denoting the number of players,
  which we index by the set $[n]=\set{1,\ldots,n}$.
\item A nonnegative integer $m$ bounding the number of pure strategies 
  of each player.\footnote{Without loss of generality, we assume each 
  player has exactly $m$ pure strategies.} 
\item A finite family $\Theta=\set{1,\ldots,M}$ of \emph{states of nature}, 
  which we index by $\theta$.
\item A family of \emph{payoff tensors} $\A_i^\theta: [m]^n \to [-1,1]$, 
  one per player $i$ and state of nature $\theta$, 
  where $\A_i^\theta(s_1,\ldots,s_n)$ is the payoff to 
  player $i$ when the state of nature is $\theta$ and 
  each player $j$ plays strategy $s_j$.
\end{itemize}

A \emph{Bayesian} normal form game of incomplete information is
  additionally equipped with a \emph{common prior distribution}
  $\lambda \in \Delta_{M}$ on the states of nature. 
Absent additional informational, risk neutral players behave as in the
  complete information game $\Ex_{\theta \sim \lambda} [\A^\theta]$. 
We consider signaling schemes which partially and symmetrically inform
  players by publicly announcing a signal $\sigma$, correlated with
  $\theta$; this induces a common posterior belief on the state of
  nature for each value of $\sigma$. 
When players' posterior belief over
  $\theta$ is given by $\mu \in \Delta_M$, we use $\A^\mu$ to denote the
  equivalent complete information game $\Ex_{\theta \sim \mu}
  [\A^\theta]$.  
As shorthand, we use $\A_i^\mu(x_1,\ldots,x_n)$ to
  denote $\Ex [\A_i^\theta(s_1,\ldots,s_n)]$ when $\theta \sim \mu \in
  \Delta_M$ and $s_i \sim x_i \in \Delta_m$. 
In the event that the state
  of nature is $\theta$ and players play the pure strategy profile
  $s_1,\ldots,s_n$, we refer to the tuple $(\theta,s_1,\ldots,s_n)$ as
  the \emph{state of play}.

For all our results, we assume that a Bayesian game 
  $(\A,\lambda)$ is represented explicitly as a 
  list of tensors 
  $\set{\A_i^\theta \in [-1,1]^{m^n} : i \in [n], \theta \in [M]}$, 
and a vector $\lambda \in \Delta_M$.


\subsubsection*{Single Item Second-Price Auctions}
We consider Bayesian second-price auctions described by the following parameters:

\begin{itemize}
\item A nonnegative integer $n$ denoting the number of bidders. 
We index the bidders by the set $[n]=\set{1,\ldots,n}$.
\item A finite family $\Theta=\set{1,\ldots,M}$ of \emph{states of nature}, 
  indexed by $\theta$. 
The states of nature  represent potential configurations of the item for sale. 
\item A common-knowledge prior distribution $\lambda \in \Delta_M$ on the 
   states of nature.
\item A common-knowledge prior distribution 
  $\D$ on \emph{valuation matrices} $\V \in [0,1]^{n \times M}$
\end{itemize}


We assume the state of nature $\theta \in \Theta$, describing the item
  being sold in the auction, is first drawn from $\lambda$, and revealed
  to the auctioneer but not to the bidders. 
Then the auctioneer reveals a public signal $\sigma$, 
  a (partial) description of the item,
  according to some signaling scheme (See Section
  \ref{prelim:schemes}). 
Subsequently, players' valuations $\V \in
 [0,1]^{n \times M}$ are drawn from $\D$, 
 where $\V_{i\theta}$ is player $i$'s value for the 
 item described by $\theta$, and each
 player $i$ privately learns his valuation $\V_i$. 
Finally a second-price auction is run, where bidders bid according to their
 private valuations and their posterior belief (after learning
 $\sigma$) regarding the configuration of the item being
 sold.\footnote{As mentioned in \cite{DIR14}, the particular choice
 of auction is immaterial.
}

We note that we assume that $\D$ and $\lambda$ are independent. 
We also emphasize that the auctioneer knows nothing regarding 
  $\V$ besides its distribution $\D$ prior to running the auction, 
  and that the bidders know nothing regarding $\theta$ 
  besides its distribution $\lambda$ and the signal $\sigma$.

For our algorithmic results, we assume that $\lambda$ is represented
  explicitly as a vector, and that $\D$ is given explicitly as a list of
  $n \times M$ valuation matrices $\V^1,\ldots,\V^r \in [0,1]^{n\times M}$ with
  associated probabilities $\rho_1,\ldots,\rho_r$ summing to
  $1$. 
However, as we show in Section \ref{sec:auctionquasi}, our
  results also generalize to the case in which $\D$ can only be sampled
  efficiently. 
The latter includes, as a special case, the scenario in
  which players' valuations are independent of each other, with each
  represented explicitly as a list of vectors in $[0,1]^M$ with
  associated probabilities.

\subsection{Signaling Schemes} \label{prelim:schemes}

We examine policies whereby a principal reveals partial information regarding 
  the state of nature $\theta$ to the players. 
For our main results, we require that the principal reveal the same 
  information to all players in the game. 
A \emph{symmetric signaling scheme} is given by a set $\Sigma$ of 
  \emph{signals}, and a (possibly randomized) map $\varphi$ 
  from states of nature $\Theta$ to signals $\Sigma$. 
Abusing notation, we use $\varphi(\theta,\sigma)$ to denote 
  the probability of announcing signal $\sigma \in \Sigma$ 
  conditioned on the state of nature is $\theta \in \Theta$. 
We restrict attention to signaling schemes with a finite 
  set of signals $\Sigma$, and this is without loss of 
  generality when $\Theta$ is finite. 
We elaborate on this after describing the convex 
  decomposition interpretation of a signaling scheme. 


We note that signaling schemes are in one-to-one correspondence with
  \emph{convex decompositions} of the prior distribution $\lambda \in
  \Delta_M$ --- namely, distributions supported on the simplex
  $\Delta_M$, and having expectation $\lambda$.
Formally, a signaling scheme $\varphi: \Theta \to \Sigma$ 
  corresponds to the convex decomposition  
$\lambda = \sum_{\sigma \in \Sigma} \alpha_\sigma \cdot  \mu_\sigma,$
where $\alpha_\sigma= \Pr [\varphi(\theta) = \sigma] = \sum_{\theta
  \in \Theta} \lambda(\theta) \varphi(\theta,\sigma),$ and
$\mu_\sigma(\theta) = \Pr [ \theta | \varphi(\theta) = \sigma] =
\frac{\lambda(\theta) \varphi(\theta,\sigma)}{\alpha_\sigma}.$
Note that $\alpha_\sigma$ is the probability of announcing signal
$\sigma$, and $\mu_\sigma \in \Delta_M$ is the \emph{posterior distribution} 
  of $\theta$ conditioned on signal $\sigma$. The
  converse is also true: every convex decomposition of $\lambda$
  corresponds to a signaling scheme.

We judge the quality of a signaling scheme by the outcome it induces
  signal by signal. 
Specifically, the principal is equipped with an
  objective function of the form $\sum_\sigma \alpha_\sigma \cdot f(
  \mu_\sigma)$, where $f: \Delta_M \to \RR$ is some function mapping a
  posterior distribution to the quality of the equilibrium chosen by the
  players. 
For example, $f$ may be the social welfare at the induced
  equilibrium, or any weighted combination of players' utilities at
  equilibrium, or something else entirely. 
In this setup, one can show that there always exists an
  signaling scheme with a finite set of signals which maximizes our
  objective, so long as the states of nature are finitely many.  
The optimal choice of signaling scheme is related 
  to the \emph{concave envelope} 
  $f^+$ of the function $f$.\footnote{$f^+$ is the
  point-wise lowest concave function $h$ for which $h(x) \geq f(x)$
  for all $x$ in the domain. 
Equivalently, the hypograph of $f^+$ is
  the convex hull of the hypograph of $f$.}
Specifically, such a signaling scheme 
  achieves $\sum_\sigma \alpha_\sigma \cdot f( \mu_\sigma) = f^+(\lambda)$. 
Application of Caratheodory's theorem to the hypograph of $f$, 
  therefore, shows that $M+1$ signals suffice.


For concreteness, the reader can think of a signaling scheme $\varphi$
  as represented by the matrix of pairwise probabilities
  $\varphi(\theta,\sigma)$. 
Since we only consider games where the
  states of nature, and therefore also the number of signals w.l.o.g.,
are polynomially many in the description size of the game, this is a
compact representation. The representation of $\varphi$ as a convex
decomposition would do equally well, as both representations can be
efficiently computed from each other.



\subsection{Meshing Schemes at a High Level}

Note that direct discretization of the space of posterior beliefs $\Delta_M$
  is usually impractical when $M$ is large.
Instead, our meshing scheme overcomes this 
  ``curse of dimensionality'' 
   by using a partition of $\Delta_M$ into a quasi-polynomial number of 
   subspaces (of posterior beliefs).
We then select at most one posterior distribution from each component of the 
  partition as the convex decomposition of the prior $\lambda$.
Naturally, such partition of $\Delta_M$ needs to satisfy 
  both the mathematical property that there exists an approximately optimal 
  convex decomposition of $\lambda$  that respects the partition,
  and the algorithmic property that such convex decomposition
  can be computed efficiently.

For Bayesian games, we use a net of the equilibrium space
  known to exists due to Lipton, Markakis, and Mehta \cite{lmm03} to induce
  the desired partition of $\Delta_M$, and use linear programming
  to compute the convex decomposition of $\lambda$ with respect to the partition.
For Bayesian auctions, we prove that logarithmic sized subsets of items
  can be used to define the desired partition of $\Delta_M$
  that can be used in the submodular function maximization formulation
  developed in \cite{DIR14}.

\subsection{Equilibria and Objectives} \label{prelim:equilibria}


Given a Bayesian game, a symmetric signaling scheme $(\alpha, \mu)$ with signals $\Sigma$ induces $|\Sigma|$ subgames, one for each signal. The subgame corresponding to signal $\sigma \in \Sigma$ is played with probability $\alpha_\sigma$, and players' (common) beliefs regarding the state of nature in this subgame are given by the posterior distribution $\mu_\sigma \in \Delta_M$. The quality of a symmetric signaling scheme in such a game is contingent on a choice of an \emph{equilibrium  concept}, an \emph{equilibrium selection rule}, and an \emph{objective function}. 

\subsubsection*{Normal Form Games}
For normal form games, we adopt the  approximate Nash equilibrium as our equilibrium concept. There are two variants.

\begin{definition}
  Let $\eps \geq 0$. In an $n$-player $m$-action normal form game with expected payoffs in $[-1,1]$ given by  tensors $\A_1,\ldots,\A_n$, a mixed strategy profile $x_1,\ldots,x_n \in \Delta_m$ is an $\epsilon$-Nash Equilibrium ($\eps$-NE) if \[\A_i(x_1,\ldots,x_n) \geq \A_i(t_i, x_{-i}) - \eps \] for every player $i$ and alternative pure strategy $t_i \in [m]$.
\end{definition}

\begin{definition}
 Let $\eps \geq 0 $. In an $n$-player $m$-action normal form game with expected payoffs in $[-1,1]$ given by  tensors $\A_1,\ldots,\A_n$, a mixed strategy profile $x_1,\ldots,x_n \in \Delta_m$ is an $\epsilon$-well-supported Nash equilibrium ($\eps$-WSNE) if \[\A_i(s_i, x_{-i}) \geq \A_i(t_i, x_{-i}) - \eps\] for every player $i$, strategy $s_i$ in the support of $x_i$, and alternative pure strategy $t_i \in [m]$.
\end{definition}

Clearly, every $\epsilon$-WSNE is also an $\epsilon$-NE. Note that we omitted reference to the state of nature in the above definitions --- in a subgame corresponding to posterior beliefs $\mu \in \Delta_M$, we naturally use tensors $\A^\mu_1,\ldots\A^\mu_n$ instead.

Fixing an equilibrium concept ($\eps$-NE or $\eps$-WSNE), a Bayesian game $(\A,\lambda)$, and a signaling scheme $\varphi: \Theta \to \Delta(\Sigma)$, an \emph{equilibrium selection rule} distinguishes an equilibrium strategy profile $(x_1^\sigma,\ldots,x^\sigma_n)$ to be played in each subgame $\sigma$ --- we call the tuple $X=\set{x^\sigma_i: \sigma \in \Sigma, i \in [n]}$ a \emph{Bayesian equilibrium} of the game $(\A,\lambda)$ with signaling scheme $\varphi$. Together with the prior $\lambda$, the Bayesian equilibrium $X$ induces a distribution $\Gamma \in \Delta_{\Theta \cross [m]^n}$ over states of play --- we refer to $\Gamma$ as a \emph{distribution of play}. We say $\Gamma$ is  \emph{implemented} by signaling scheme $\varphi$ in $\eps$-NE ($\eps$-WSNE). This is analogous to implementation of allocation rules in traditional mechanism design.

Our results concern objectives which depend only on the state of play,
and we seek to maximize the objective in expectation over the
distribution of play. These include, but are not restricted to, the
social welfare of the players, as well as weighted combinations of
player utilities. Formally, our objective is given by a tensor $\F:
\Theta \cross [m]^n \to [-1,1] $.\footnote{Equivalently, we may think
  of the objective as the payoff tensor of an additional player in the
  game.} We seek a signaling scheme $\varphi: [M] \to \Sigma$, as well
as a Bayesian $\eps$-NE ($\eps$-WSNE) $X=\set{x^\sigma_i \in \Delta_m:
  i \in [n], \sigma \in \Sigma}$, maximizing $\Ex
[\F(\theta,\vec{s})]$ over the resulting distribution of play.
We use $OPT_{\textit{NE}}^\eps(\A,\lambda)$ and $OPT_{\textit{WSNE}}^\eps(\A,\lambda)$ to denote the optimal values of these two optimization problems. As shorthand, we use $f(\varphi,X) = \Ex_{\theta \sim \lambda} \Ex_{\sigma \sim \varphi(\theta)} \Ex_{\vec{s} \sim x^\sigma} [ \F(\theta, \vec{s})]$ to denote the expectation of our objective  $\F$ for signaling scheme $\varphi$ and corresponding Bayesian equilibrium $X$. 


\subsubsection*{Single Item Auctions}

In our second-price auction game, we adopt the (unique) dominant-strategy truth-telling equilibrium as our solution concept. Specifically, given a symmetric signaling scheme $\varphi: \Theta \to \Sigma$, for the subgame corresponding to the signal $\sigma \in \Sigma$  it is a dominant strategy for player $i$ to bid $\Ex_{\theta \sim \lambda } [ \V_{i\theta} | \varphi(\theta) = \sigma ]$ --- his posterior expected value for the item conditioned on the received signal~$\sigma$.

For our objective, we restrict attention to the \emph{social welfare} -- the expected value of the winning bidder for the item they win. Given a signaling scheme $\varphi$ and a valuation matrix $\V \in [0,1]^{n \times M}$, assuming bidders play the truth-telling equilibrium this is given by:
\begin{align*}
\textit{welfare}(\varphi,\lambda,\V) &=  \Ex_{\theta \sim \lambda} \Ex_{\sigma \sim \varphi(\theta)} \left[  \max_{i=1}^n \Ex_{\theta \sim \lambda } \left[ \V_{i\theta} | \varphi(\theta) = \sigma \right] \right]  \\
&=    \sum_{\sigma \in \Sigma} \max_{i=1}^n \sum_{\theta \in \Theta} \lambda(\theta) \varphi(\theta,\sigma) \V_{i\theta}
\end{align*}
When $\V$ is drawn from a prior distribution $\D$,
the expected welfare is given by 
\[\textit{welfare}(\varphi,\lambda,\D) = \Ex_{\V \sim \D}[ \textit{welfare}(\varphi,\lambda,\V)].\]

As noted in \cite{DIR14}, convexity of the welfare in the probabilities $\varphi(\theta,\sigma)$ implies that full-information-revelation is optimal, in general. However, we consider signaling subject to a communication constraint  --- i.e. with $\Sigma = \set{1,\ldots,k}$ for an input parameter $k < M$ limiting the number of different messages describing the item for sale.

\section{Signaling in Normal Form Games}
\label{sec:game}
We now consider signaling in explicitly represented games when the
adopted solution concept is the $\eps$-Nash equilibrium or the
$\eps$-well-supported Nash equilibrium. We prove the following
bi-criteria result.

\begin{theorem}\label{thm:gamemain}
  Fix $\eps > 0$, $\delta \geq 0$. Given as input an explicitly-described Bayesian normal form game $(\A,\lambda)$ with $n=O(1)$ players, $m$ actions, and $M$ states of nature, and an objective $\F:[M] \cross [m]^n \to [-1,1]$,  there is an algorithm with runtime $\poly(M, m^{{\ln m}/{\eps^2} })$ which outputs a signaling scheme $\varphi$ and corresponding Bayesian $(\epsilon+\delta)$-equilibrium $X$ satisfying $f(\varphi,X) \geq OPT^{\delta}(\A,\lambda) - \eps$. This holds for both approximate NE and approximate WSNE.
\end{theorem}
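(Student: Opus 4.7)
The plan is to follow the two-step meshing strategy: build a quasi-polynomial-sized dictionary $\D$ of candidate mixed strategy profiles (one of which will be declared the approximate equilibrium in each subgame of the output), and then assemble a signaling scheme by linear programming that decomposes the prior $\lambda$ into posterior beliefs, at most one per dictionary entry. For the dictionary I invoke the sampling lemma of Lipton, Markakis and Mehta \cite{lmm03}: fixing $k = \Theta(\ln m / \eps^2)$ and letting $\D$ be the set of all $k$-uniform mixed profiles (each player uniform over a size-$k$ multiset of pure strategies), one has $|\D| \le m^{nk} = m^{O(\ln m / \eps^2)}$ for constant $n$. The universality property I need is: for every posterior $\mu \in \Delta_M$ and every $\delta$-NE (respectively $\delta$-WSNE) $x$ in $\A^\mu$, there exists $\hat x \in \D$ which is an $(\eps+\delta)$-NE (respectively $(\eps+\delta)$-WSNE) in $\A^\mu$ and additionally satisfies $|\Ex_{\theta \sim \mu,\, s \sim \hat x}[\F(\theta,s)] - \Ex_{\theta \sim \mu,\, s \sim x}[\F(\theta,s)]| \le \eps$. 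This is an easy extension of \cite{lmm03}: $\hat x$ is constructed by i.i.d.\ sampling from $x$, and adding the function $s \mapsto \Ex_{\theta\sim\mu}[\F(\theta,s)]$ to the union bound alongside the deviation payoffs $\A^\mu_i(t_i,\cdot)$ keeps $k$ of the same order.

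For each $\hat x \in \D$, let $P_{\hat x} \subseteq \Delta_M$ denote the set of posteriors for which $\hat x$ is an $(\eps+\delta)$-equilibrium in $\A^\mu$; since the defining inequalities (e.g.\ $\A^\mu_i(\hat x) - \A^\mu_i(t_i,\hat x_{-i}) \geq -(\eps+\delta)$ for NE, with the analogous support-quantified inequalities for WSNE) are linear in $\mu$, each $P_{\hat x}$ is a convex polytope. The assembly LP has variables $\alpha_{\hat x} \geq 0$ (signal probability) and $y_{\hat x}(\theta) \geq 0$ for each $\theta$, interpreted as $y_{\hat x} = \alpha_{\hat x}\,\mu_{\hat x}$. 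The constraints are $\sum_\theta y_{\hat x}(\theta) = \alpha_{\hat x}$; the prior decomposition $\sum_{\hat x \in \D} y_{\hat x}(\theta) = \lambda(\theta)$; and the inequalities defining $P_{\hat x}$ multiplied through by $\alpha_{\hat x}$, which become linear homogeneous inequalities in $(y_{\hat x}, \alpha_{\hat x})$. The objective is $\sum_{\hat x \in \D} \sum_\theta y_{\hat x}(\theta)\, \Ex_{s \sim \hat x}[\F(\theta,s)]$, which equals $f(\varphi,X)$ under the signaling-scheme interpretation. The LP has $\poly(M \cdot |\D|)$ variables and constraints, hence is solvable in time $\poly(M, m^{\ln m/\eps^2})$.

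To lower-bound the LP optimum by $OPT^\delta(\A,\lambda) - \eps$, take an optimal scheme $\varphi^*$ with $\delta$-equilibrium $X^*$ achieving $OPT^\delta$. For each signal $\sigma$ with posterior $\mu^*_\sigma$ and played profile $x^{*,\sigma}$, select $\hat x^\sigma \in \D$ from the universality property, and aggregate signals sharing the same dictionary entry into $\alpha_{\hat x} := \sum_{\sigma:\, \hat x^\sigma = \hat x} \alpha^*_\sigma$ and $y_{\hat x}(\theta) := \sum_{\sigma:\, \hat x^\sigma = \hat x} \alpha^*_\sigma \mu^*_\sigma(\theta)$. Convexity of $P_{\hat x}$ ensures $\mu_{\hat x}=y_{\hat x}/\alpha_{\hat x} \in P_{\hat x}$, so $(\alpha,y)$ is LP-feasible, and the objective-preservation clause of the universality property yields LP value at least $OPT^\delta - \eps$. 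To read off the output scheme, each supported $\hat x$ becomes a signal sent with probability $\alpha_{\hat x}$ inducing posterior $\mu_{\hat x}$, and the players play $\hat x$ itself as the $(\eps+\delta)$-equilibrium in the induced subgame. The main point requiring care is the objective-preservation extension of \cite{lmm03} invoked above; since $\F^\mu$ is a bounded function on $[m]^n$, this follows by standard Hoeffding-style concentration in the i.i.d.\ sampling construction underlying the sampling lemma.
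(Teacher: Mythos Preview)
Your proposal is correct and follows essentially the same approach as the paper: the dictionary of $k$-uniform profiles from \cite{lmm03} (with the objective tensor folded in as an extra ``player''), the polytope of posteriors supporting each profile as an approximate equilibrium, and the change of variables $y_{\hat x}=\alpha_{\hat x}\mu_{\hat x}$ turning the decomposition problem into a linear program are exactly the paper's Lemmas~3.2--3.4. Your explicit feasibility argument (aggregate the optimal scheme's signals by dictionary entry and use convexity of $P_{\hat x}$) is the same merging step the paper describes when two signals share an equilibrium profile.
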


When the number of players is constant, we can in
quasi-polynomial time approximate the optimal reward from signaling
while losing an additive $\epsilon$ in the objective as well as in the
incentive constraints. Our proof of this theorem hinges on three main
lemmas: the first is drawn from the work of Lipton et al.~\cite{lmm03}
regarding the existence of a quasi-polynomial-sized ``net'' of the
space of equilibria; the second lemma states that the posterior
beliefs implementing a particular approximate equilibrium form a
simple polytope, in doing so reducing the signaling problem to
optimization over convex decompositions of $\lambda$ into a family of
posteriors, each belonging to a given polytope; and the third lemma
shows that optimization over such convex decompositions reduces to a
linear program.

\begin{lemma} [Lipton, Markakis and Mehta~\cite{lmm03}] \label{lem:lmm}
  Fix $n=O(1)$. For each integer $m$ and $\eps > 0$, there is a family of mixed strategies $\N_{m,\eps} \sse \Delta_m$ with $|\N_{m,\eps}| \leq m^{O({\log m}/{\eps^2})}$ such that for every $n$-player $m$-action normal form game $\A$, objective tensor $\F: [m]^n \to [-1,1]$, and $\delta$-NE  (WSNE) $x$ of $\A$, there is an $(\epsilon + \delta)$-NE (WSNE) $y$ of $\A$ such that each $y_i \in \N_{m,\eps}$,  
and $|\Ex_{s\sim x}[\F(s)] - \Ex_{s \sim y}[\F(s)]| \leq \eps$. Moreover, 
$\N_{m,\eps}$ can be enumerated in time $m^{O({\log m}/{\eps^2})}$.
\end{lemma}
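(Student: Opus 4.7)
My plan is to let $\N_{m,\eps}$ be the family of all \emph{$k$-uniform} mixed strategies on $[m]$, meaning probability vectors $y \in \Delta_m$ whose coordinates are integer multiples of $1/k$, with $k = \ceil{c \ln m / \eps^2}$ for a sufficiently large absolute constant $c$. Each such $y$ is the empirical distribution of some length-$k$ multiset over $[m]$, so $|\N_{m,\eps}| \leq \binom{m+k-1}{k} \leq m^{O(\log m / \eps^2)}$, and the same bound holds for enumeration time since we may simply iterate over all multisets.

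Given a $\delta$-NE $x = (x_1, \ldots, x_n)$ of $\A$, the approximator $y$ would be constructed probabilistically in the style of \cite{lmm03}: independently for each player $i$, draw $k$ pure strategies $s_i^1, \ldots, s_i^k$ i.i.d.\ from $x_i$, and take $y_i \in \N_{m,\eps}$ to be their empirical distribution. For any fixed pair $(i, t_i)$ with $t_i \in [m]$, I would view $\A_i(t_i, y_{-i})$ as a function of the $k(n-1)$ random samples $\{s_\ell^j\}_{\ell \neq i,\ j \leq k}$. Changing any single sample alters one $y_\ell$ by $1/k$ in each of two coordinates and, since $\A_i$ is multilinear in $y_\ell$ and bounded in $[-1,1]$, shifts $\A_i(t_i, y_{-i})$ by at most $2/k$; McDiarmid's inequality therefore gives
\[ \Pr[\, |\A_i(t_i, y_{-i}) - \A_i(t_i, x_{-i})| > \eps/2 \,] \leq 2 \exp(-\Omega(k \eps^2 / n)). \]
The identical estimate applies to $\A_i(y_1, \ldots, y_n)$ and to $\Ex_{s \sim y}[\F(s)]$, since $\F$ takes values in $[-1,1]$ just like each $\A_i$.

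I would then union-bound over the $O(nm)$ events indexed by $(i, t_i)$ plus a single event for $\F$; for the WSNE variant I would additionally union over the $O(nm^2)$ events indexed by $(i, s_i, t_i)$ with $s_i$ in the support of $x_i$. Choosing the constant $c$ in $k = \ceil{c \ln m / \eps^2}$ large enough drives the total failure probability strictly below $1$, so the probabilistic method produces a $y \in \N_{m,\eps}^n$ for which all the desired deviations are at most $\eps/2$. The $(\eps+\delta)$-NE conclusion then follows by chaining, e.g.\
\[ \A_i(y) \geq \A_i(x) - \eps/2 \geq \A_i(t_i, x_{-i}) - \delta - \eps/2 \geq \A_i(t_i, y_{-i}) - \delta - \eps, \]
and the WSNE variant is analogous once we observe that the support of each $y_i$ lies in the support of $x_i$, so the $\delta$-WSNE inequalities $\A_i(s_i, x_{-i}) \geq \A_i(t_i, x_{-i}) - \delta$ at $x$ transfer to $(\eps+\delta)$-WSNE inequalities at $y$ by concentration. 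The objective bound $|\Ex_{s\sim y}[\F] - \Ex_{s \sim x}[\F]| \leq \eps$ is immediate from the objective concentration event.

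The hard part will be the concentration step itself, because $\A_i(t_i, y_{-i})$ is a multilinear function of $n-1$ independently sampled empirical distributions rather than a single average of i.i.d.\ samples, so a direct Hoeffding bound is inapplicable. The bounded-differences viewpoint sketched above finesses this by treating every individual sample as a coordinate of a $k(n-1)$-dimensional random input whose coordinate-wise influence on $\A_i(t_i, y_{-i})$ is only $O(1/k)$; because $n = O(1)$, the resulting McDiarmid exponent remains $\Omega(k \eps^2)$, which is exactly what is needed to absorb the $O(nm^2)$-term union bound while keeping $k = O(\log m / \eps^2)$. Together with the $m^{O(\log m / \eps^2)}$ cardinality and enumeration bounds for $\N_{m,\eps}$, this closes the lemma.
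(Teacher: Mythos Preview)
Your proposal is correct and follows essentially the same approach as the paper: take $\N_{m,\eps}$ to be the $k$-uniform strategies for $k = \Theta(\log m / \eps^2)$, sample $y_i$ empirically from $x_i$, apply concentration plus a union bound, and use $\operatorname{supp}(y_i) \subseteq \operatorname{supp}(x_i)$ for the WSNE case. The only cosmetic difference is that the paper invokes \cite[Theorem~2]{lmm03} as a black box and handles the objective $\F$ by inserting a fictitious $(n{+}1)$th player whose payoff tensor is $\F$, whereas you re-derive the concentration via McDiarmid and treat $\F$ as one more multilinear function in the union bound; both routes yield the same bound and the same net $\N_{m,\eps}$.
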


\begin{lemma} \label{lem:posterior_polytope}
  Fix a normal form game of incomplete information $\A$ with $n$ players, $m$ actions, and $M$ states of nature. Consider a mixed strategy profile $x=(x_1,\ldots,x_n)$ with $x_i \in \Delta_m$. For each $\epsilon \geq 0$, the class of posterior beliefs inducing $x$ as an $\epsilon$-NE (WSNE) is a convex polytope described by $\poly(n,m)$ linear inequalities.
\end{lemma}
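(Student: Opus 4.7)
The plan is to exploit the fact that the expected payoff $\A_i^\mu(s_1,\ldots,s_n) = \sum_\theta \mu(\theta)\,\A_i^\theta(s_1,\ldots,s_n)$ is an \emph{affine} function of $\mu \in \Delta_M$ for any fixed pure (and hence any fixed mixed) strategy profile. Since the players' strategies $x_1,\ldots,x_n$ are also fixed in the statement of the lemma, every quantity appearing in the $\eps$-equilibrium inequalities is affine in $\mu$, and the feasible set is therefore carved out by finitely many linear inequalities.

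Concretely, for the $\eps$-NE case I would write down, for each player $i \in [n]$ and each alternative pure strategy $t_i \in [m]$, the single inequality
\[
\A_i^\mu(x_1,\ldots,x_n) \;-\; \A_i^\mu(t_i, x_{-i}) \;\geq\; -\eps,
\]
which by the above observation is linear in $\mu$. This gives at most $nm$ inequalities. For the $\eps$-WSNE case I would replace the best-response deviation by a ``no profitable pure swap from the support'' condition: for each player $i$, each $s_i$ in the support of $x_i$, and each $t_i \in [m]$,
\[
\A_i^\mu(s_i, x_{-i}) \;-\; \A_i^\mu(t_i, x_{-i}) \;\geq\; -\eps,
\]
again linear in $\mu$, yielding at most $nm^2$ inequalities. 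Intersecting with the simplex $\Delta_M$ (an ambient polytope, whose defining inequalities do not depend on $n$ or $m$) preserves convexity and polyhedrality, so the set of posteriors implementing $x$ as an approximate equilibrium is a convex polytope cut out by $\poly(n,m)$ non-trivial linear constraints.

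I do not anticipate a real obstacle here: the lemma is essentially a bookkeeping statement, made possible by the defining identity $\A^\mu = \Ex_{\theta \sim \mu}[\A^\theta]$, which makes all equilibrium conditions linear in $\mu$ once $x$ is frozen. The only thing to be careful about is that the count $\poly(n,m)$ refers to the equilibrium-defining inequalities; the $M$ simplex constraints are treated as the ambient description of $\Delta_M$, which is consistent with how this polytope is subsequently used to set up the linear program in the next step of the algorithm.
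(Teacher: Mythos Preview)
Your proposal is correct and matches the paper's own proof essentially verbatim: the paper also writes down the inequalities $\sum_\theta \mu(\theta)\A_i^\theta(x) \geq \sum_\theta \mu(\theta)\A_i^\theta(j,x_{-i}) - \eps$ for the $\eps$-NE case and the analogous support-based inequalities for the $\eps$-WSNE case, and observes that these are linear in $\mu$ once $x$ is fixed, giving $\poly(n,m)$ constraints. Your additional remark about the ambient simplex constraints is a helpful clarification but does not change the argument.
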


\begin{lemma} \label{lem:decompose}
Given a family of non-empty polytopes $\P_1,\ldots,\P_t \sse \Delta_M$ described by $\ell$ inequalities each, a point $\lambda \in \Delta_M$, and linear objectives $w_1,\ldots,w_t \in \RR^M$, the non-linear optimization problem \eqref{lp:decompose} can be solved in $\poly(t,\ell,M)$ time.
\end{lemma}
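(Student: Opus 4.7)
The optimization problem (\ref{lp:decompose}) to be solved decomposes $\lambda$ as $\sum_{i=1}^t \alpha_i \mu_i$ with $\mu_i \in \P_i$, $\alpha_i \geq 0$, $\sum_i \alpha_i = 1$, so as to maximize $\sum_i \alpha_i \cdot w_i^\top \mu_i$. This is nonlinear because of the products $\alpha_i \mu_i$ in both the objective and the decomposition constraint, and the membership constraints $\mu_i \in \P_i$ are enforced only when $\alpha_i > 0$.

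The plan is to apply the standard change of variables $y_i := \alpha_i \mu_i \in \RR^M$, and to write $\P_i$ in the form $\{\mu \in \RR^M : A_i \mu \leq b_i,\ \one^\top \mu = 1,\ \mu \geq 0\}$ (where the equality is absorbed into the $\ell$ inequalities describing $\P_i$ if needed). Since $\mu_i \in \Delta_M$, the scalar $\alpha_i$ is recoverable as $\alpha_i = \one^\top y_i$, so it need not appear as a separate variable. Under this substitution the objective becomes $\sum_i w_i^\top y_i$, the decomposition constraint becomes $\sum_i y_i = \lambda$, and each polytope constraint $A_i \mu_i \leq b_i$ becomes the linear constraint $A_i y_i \leq (\one^\top y_i) b_i$. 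Non-negativity $\alpha_i \geq 0$ and $\mu_i \geq 0$ collapses into $y_i \geq 0$. The resulting LP has $tM$ variables and $\poly(t,\ell,M)$ constraints and can be solved in $\poly(t,\ell,M)$ time by any polynomial-time LP solver.

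It remains to argue that optimal values coincide. For the easy direction, any feasible $(\alpha_i, \mu_i)_i$ yields $y_i = \alpha_i \mu_i$ satisfying the LP, with matching objective. For the reverse direction, given a feasible $(y_i)_i$, set $\alpha_i = \one^\top y_i$; then $\sum_i \alpha_i = \one^\top \lambda = 1$. Whenever $\alpha_i > 0$, define $\mu_i = y_i / \alpha_i$; dividing the constraint $A_i y_i \leq \alpha_i b_i$ by $\alpha_i$ shows $\mu_i \in \P_i$, and $y_i = \alpha_i \mu_i$ by construction. When $\alpha_i = 0$, we have $y_i = 0$ (from $y_i \geq 0$ and $\one^\top y_i = 0$), so that index contributes nothing to $\lambda$ or the objective, and we can pick any $\mu_i \in \P_i$ (which is non-empty by hypothesis) as a formal witness. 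The objectives match term by term, completing the equivalence.

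The only minor subtlety is the boundary case $\alpha_i = 0$: the transformed polytope constraint $A_i y_i \leq (\one^\top y_i) b_i$ becomes $A_i \cdot 0 \leq 0$, which is trivially satisfied regardless of the sign of $b_i$, so the LP correctly allows unused components without placing spurious restrictions on $b_i$. Thus the nonlinear problem reduces exactly to the LP above, yielding the claimed $\poly(t,\ell,M)$ runtime.
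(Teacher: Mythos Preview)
Your proposal is correct and follows essentially the same approach as the paper: the change of variables $y_i = \alpha_i \mu_i$ (the paper uses $\gamma_\sigma$), recovery of $\alpha_i$ as $\one^\top y_i$, and homogenization of the polytope constraints $A_i y_i \leq (\one^\top y_i) b_i$ are exactly what the paper does. Your argument is in fact more careful than the paper's, since you explicitly verify both directions of the equivalence and handle the degenerate case $\alpha_i = 0$ using the non-emptiness hypothesis on $\P_i$.
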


\begin{lp}\label{lp:decompose}
  \maxi{\sum_{\sigma=1}^t \alpha_\sigma w_\sigma \cdot  \mu_\sigma}
\st
\con{\alpha \in \Delta_t}
\con{\sum_{\sigma=1}^t \alpha_\sigma \mu_\sigma  = \lambda}
\qcon{\mu_\sigma \in \P_\sigma}{\sigma=1,\ldots,t}
\end{lp}

Before proving each of these lemmas, we first elaborate on how
they imply Theorem \ref{thm:gamemain}.  Given a signaling scheme
$\varphi$ with decomposition form $(\alpha,\mu)$, and an (approximate)
equilibrium $x^\sigma$ for each subgame corresponding to $\sigma$, the
objective value is
\[f(\varphi,x)  = \sum_{\sigma \in \Sigma} \alpha_\sigma \F(\mu_\sigma, x^\sigma)\] where $\F(\mu,x)$ denotes $\Ex_{\theta \sim \mu} \Ex_{\vec{s} \sim x} \F(\theta,\vec{s})$. 

Lemma \ref{lem:lmm} implies that, in order to complete the proof of
Theorem \ref{thm:gamemain}, it suffices to show how to exactly
optimize, in the claimed time, over signaling schemes in which
$x^\sigma \in \N_{m,\eps}^n$ for each signal $\sigma \in \Sigma$. We
may restrict attention to signaling scheme/equilibrium combinations
for which each mixed strategy profile $x \in \N_{m,\eps}^n$ is
selected for at most one subgame: when $x$ is the equilibrium for both
the subgames $\A^{\sigma_1}$ and $\A^{\sigma_2}$, we can ``merge'' the two
signals $\sigma_1$ and $\sigma_2$ into a signal $(\sigma_1,\sigma_2)$,
giving rise to a new subgame $\A^{(\sigma_1,\sigma_2)}$ with posterior
belief $\mu_{(\sigma_1,\sigma_2)} =
\frac{\alpha_{\sigma_1}}{\alpha_{\sigma_1} + \alpha_{\sigma_2}} \mu_{\sigma_1}
+ \frac{\alpha_{\sigma_2}}{\alpha_{\sigma_1} + \alpha_{\sigma_2}}
\mu_{\sigma_2}$ and probability $\alpha_{(\sigma_1,\sigma_2)} =
\alpha_{\sigma_1} + \alpha_{\sigma_2}$.
Lemma \ref{lem:posterior_polytope} implies that $x$ remains an (approximate) equilibrium of the merged subgame. Moreover, the objective is unchanged because $\F(\mu,x)$ is linear in its first argument. 

For notational convenience we assume that each $x \in
\N_{m,\eps}^n$ is induced as an equilibrium of exactly one subgame, by
allowing signals which occur with probability $0$, and discarding
strategy profiles in $\N_{m,\eps}^n$ which can not be induced as
equilibria of any posterior belief.
The latter can be done in
  polynomial time, by checking whether the corresponding polytope (as
  given by Lemma \ref{lem:posterior_polytope}) is empty. 
Writing
$\N_{m,\eps}^n = \set{x^1,\ldots,x^t}$ for $t=m^{O(\log m/{\eps^2})}$, our optimization task can be written as follows.

\begin{lp}\label{lp:decompose2}
  \maxi{\sum_{\sigma=1}^t \alpha_\sigma \F(\mu_\sigma,x^\sigma)}
\st
\con{\alpha \in \Delta_t}
\con{\sum_{\sigma=1}^t \alpha_\sigma \mu_\sigma = \lambda}
\qcon{\mbox{$x^\sigma$ is an equilibrium of $\A^{\mu_\sigma}$}}{\sigma=1,\ldots,t}
\end{lp}

Lemma \ref{lem:posterior_polytope}, and the linearity of $\F(.,.)$ in
its first argument, imply that 
optimization problem
\eqref{lp:decompose2} is of the form given in \eqref{lp:decompose}
with $\ell = \poly(n,m)$. Lemma \ref{lem:decompose}, and our
assumption that $n$ is a constant, imply that \eqref{lp:decompose2}
can be solved in time $\poly(M,m^{({\ln m}/{\eps^2})})$.
This completes the proof of Theorem \ref{thm:gamemain}.


\subsection{Proof Sketch of Lemma \ref{lem:lmm}}
The special case of this lemma for $\epsilon$-NE and $\delta=0$ follows directly from the statement \cite[Theorem 2]{lmm03}, by including an additional player in the game with no actions, and payoffs given by the objective tensor $\F$ evaluated on the strategies of the other (real) players. The set $\N_{\eps,m}$ is taken to be the family of all mixed strategies which are uniformly distributed on a multiset contained in $[m]$ of size $\frac{3(n+1)^2 \ln (n+1)^2 m}{\eps^2}$.

More generally, at the heart of \cite[Theorem 2]{lmm03} is the fact that, for every mixed strategy profile $x \in \Delta_m^n$, one can choose a mixed strategy profile $y \in \N_{\eps,m}^n$, with $supp(y_i) \sse supp(x_i)$ for each $i \in [n]$, so that $|\A_i(j,y_{-i}) - \A_i(j,x_{-i})| \leq \frac{\eps}{2}$ for every player $i$ and pure strategy $j \in [m]$. This has the implication that $y$ is an $(\epsilon+\delta)$-NE (WSNE) when $x$ is a $\delta$-NE (WSNE). Moreover, when the objective is included as a player in the game with no nontrivial strategies (which is without loss of generality), this also implies that $|\F(y) - \F(x)| \leq \eps$.

\subsection{Proof of Lemma \ref{lem:posterior_polytope}}
For $x$ to be an $\eps$-NE of $\A^\mu=\sum_{\theta=1}^M \mu(\theta) \A^\theta$, for $\mu \in \Delta_M$, the following must hold:
\begin{lp}
  \qcon{\sum_{\theta =1}^M \mu(\theta) \A_i^\theta(x) \geq \sum_{\theta =1}^M \mu(\theta) \A^\theta_i(j, x_{-i}) - \eps}{i \in [n], j \in [m]}
\end{lp}
 For an $\eps$-WSNE, the analogous system of inequalities is:
\begin{lp}
  \qcon{\sum_{\theta =1}^M \mu(\theta) \A_i^\theta(j,x_{-i}) \geq \sum_{\theta =1}^M \mu(\theta) \A^\theta_i(k, x_{-i}) - \eps}{i \in [n],j \in supp(x_i),  k \in [m]}
\end{lp}
Since $x$ is fixed, in both cases we have a system $\poly(n,m)$ linear inequalities in $\mu$.

\subsection{Proof of Lemma \ref{lem:decompose}}
We write an equivalent linear program via a change of variables. Specifically, we  let $\gamma_\sigma= \alpha_\sigma \mu_\sigma$.
Observe that after this change \eqref{lp:decompose} becomes:

\begin{lp}\label{lp:decompose3}
  \maxi{\sum_{\sigma=1}^t  w_\sigma \cdot \gamma_\sigma}
\st
\con{\alpha \in \Delta_t}
\con{\sum_{\sigma=1}^t  \gamma_\sigma = \lambda}
\qcon{\frac{\gamma_\sigma}{\alpha_\sigma} \in \P_\sigma}{\sigma=1,\ldots,t}
\end{lp}

\eqref{lp:decompose3} is not yet a linear program. However, note that
the constraint $\alpha \in \Delta_t$ is implied if we simply add the
constraints $\gamma_\sigma \succeq 0$. Moreover, because
${\gamma_\sigma}/{\alpha_\sigma} \in \Delta_M$, $\alpha_\sigma =
\sum_{\theta} \gamma_\sigma(\theta)$ holds for every feasible
solution, allowing us to simplify the constraint
${\gamma_\sigma}/{\alpha_\sigma} \in \P_\sigma$. 
Since
$\P_\sigma$ is described an explicit linear system $A^\sigma y \preceq
b^\sigma$, the non-linear system of inequalities $A^\sigma
{\gamma_\sigma}/{\alpha_\sigma} \preceq b^\sigma$ can be
re-written as the linear system $A^\sigma \gamma_\sigma \preceq
(\sum_{\theta} \gamma_\sigma(\theta)) b^\sigma$. This results in an
equivalent linear program with variables $\gamma_1,\ldots,\gamma_t \in
\RRp^M$, from which $\alpha_\sigma = \sum_{\theta}
\gamma_\sigma(\theta)$ and $\mu_\sigma = \gamma_\sigma /
(\sum_{\theta} \gamma_\sigma(\theta))$ can be recovered efficiently.


\subsection{Remarks}
\paragraph
{Zero-sum games}
When applied to two-player zero-sum games with the objective to maximize
  the first-player's payoff, our signaling scheme provides a
  stronger guarantee.
In such setting, both players retain the same payoff in any exact Nash equilibrium.
Also, any $\eps$-equilibria gives a payoff that is $\eps$
  close to the playoff of any exact equilibrium.  
Thus, the signaling scheme provided in Theorem~\ref{thm:gamemain} 
can be directly compare with the quality of the optimal signaling scheme
  without concerning equilibrium selection,
instead of a bi-criteria guarantee.

\paragraph
{Reducing the number of signals}
Although the signaling scheme provided in Theorem~\ref{thm:gamemain}
  might use quasi-polynomial number of signals, we can reduce
  the number of signals to $M+1$.
Let $w_\lambda$ be the objective value of the signaling scheme,
  and consider the set of $t$ signals used $\mu_1 \cdots \mu_t$
  and their corresponding expected objective values $w_1 \cdots w_t$.
Observe that the $M+1$ dimension point $(w_\lambda,\lambda)$ is
  a convex combination of the set of points
  $P=\{ (w_1,\mu_1) \cdots (w_t,\mu_t) \}$.
Since $w_\lambda$ is maximized, $(w_\lambda,\lambda)$ belongs to
  some facet of the convex hull of $P$.
Hence by Carath\'{e}odory's theorem, $(w_\lambda,\lambda)$ can be written
  as a convex combination of only $M+1$ points from $P$, and such decomposition
  can be computed in time polynomial in the size of $P$.
This decomposition gives a valid signaling scheme with the same
  objective value, using only $M+1$ signals.

  

\paragraph{Stackelberg games}
Our result can be extended to {\em Stackelberg games}
  which often arise in security games \cite{D14}.
Recall that in a Stackelberg game \cite{stackelberg}, one player (the leader) 
  first commits to a (mixed) strategy, 
  and then all other players (followers)  simultaneously play their strategies
  upon learning the leader's strategy.
Our result can be readily extended to Bayesian Stackelberg games
  when the objective of the signaling scheme is to maximize leader's payoff.
In this case, we can simply drop the constraints 
  regarding the leader in the polytopes defined in 
  Lemma \ref{lem:posterior_polytope},
  and only require the followers to play an approximate equilibrium
 in our algorithm presented in Theorem \ref{thm:gamemain}.

\section{Signaling in Bayesian Single Item Auctions}
\label{sec:auction}

In this section, we consider the signaling problem in a Bayesian single item auction, as described in Section \ref{sec:prelims}. As in \cite{DIR14}, our goal is to maximize the social welfare subject to a communication constraint  $k$ on  signaling scheme. We present a quasi-polynomial time approximation scheme for this problem when the valuation distribution $\D$ is given explicitly, and as a corollary also when $\D$ is given by a sampling oracle. To aid in our proof, we begin with some technical background drawn from related work. In the discussion of this section, we will use 
  {\em $(n,M,k,r)$-second-price signaling} to denote the welfare-maximization signaling problem with $n$ bidders, $M$ configurations of the item, $k$ signals, and an explicitly-described valuation distribution supported on $r$ matrices.





\subsection{Background: Reducing to Submodular Maximization for small $r$}
\label{sec:auctionbg}
Recall that a signaling scheme is
a randomized map from states of nature to
  signals. A deterministic signaling scheme with $k$ signals, naturally, is a deterministic map $\varphi: \Theta \to [k]$. Equivalently, such a scheme corresponds  to a partition of the states of nature $\Theta$ into $k$ classes $\Theta_1,\ldots,\Theta_k$, one per signal. As might be apparent from Section \ref{sec:game}, in general games randomized signaling schemes may outperform their deterministic counterparts. However, this is not the case in our auction setting. 

\begin{lemma}[\cite{DIR14}]
For communication-constrained signaling in Bayesian second-price auctions, there always exists a deterministic signaling scheme which maximizes expected social welfare.
\end{lemma}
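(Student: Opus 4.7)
The plan is to show that the expected welfare, viewed as a function of the signaling scheme, is convex, and that the feasible set of communication-constrained signaling schemes is a polytope whose vertices are precisely the deterministic schemes. A convex function on a compact polytope attains its maximum at a vertex, so this yields the claim.

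First I would fix the representation of a signaling scheme with $k$ signals as a matrix $\varphi \in [0,1]^{M \times k}$ with rows summing to $1$. The feasible set is then the product of $M$ copies of the simplex $\Delta_k$, one per state of nature, since the rows $\varphi(\theta,\cdot)$ are independent. The vertices of this product polytope are exactly the $\{0,1\}$-matrices with one $1$ per row, which correspond bijectively to deterministic schemes $\varphi: \Theta \to [k]$.

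Next I would analyze the objective. For a fixed valuation matrix $\V \in [0,1]^{n \times M}$, the welfare formula from Section~\ref{prelim:equilibria} gives
\[
\textit{welfare}(\varphi,\lambda,\V) \;=\; \sum_{\sigma=1}^{k} \max_{i \in [n]} \sum_{\theta \in \Theta} \lambda(\theta)\,\varphi(\theta,\sigma)\,\V_{i\theta}.
\]
For each $\sigma$, the inner quantity is the maximum of $n$ functions that are linear in $\varphi$, hence convex in $\varphi$; summing over $\sigma$ preserves convexity. Taking the expectation $\Ex_{\V \sim \D}[\,\cdot\,]$ is a nonnegative linear operation and therefore also preserves convexity, so $\varphi \mapsto \textit{welfare}(\varphi,\lambda,\D)$ is convex on the feasible polytope.

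Finally, since a convex function on a nonempty compact polytope attains its maximum at some extreme point, and the extreme points are deterministic schemes, there exists a deterministic signaling scheme achieving the optimum. I do not anticipate a real obstacle here — the only thing to double-check is that exchanging $\max_i$ with $\Ex_{\V}$ is not required (it is not: the $\max$ is taken inside the expectation, and convexity in $\varphi$ holds pointwise in $\V$), and that the product-of-simplices structure is what makes vertices correspond to fully deterministic, rather than merely rowwise-extreme, maps.
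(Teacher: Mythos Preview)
Your argument is correct. The paper does not actually include a proof of this lemma---it is imported from \cite{DIR14}---but it does remark elsewhere that the welfare is convex in the probabilities $\varphi(\theta,\sigma)$, which is exactly the observation you exploit; your proof fills in the standard polytope-vertex argument that the paper leaves implicit.
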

Thus instead of solving the signaling problem by assigning
probabilities in a continuous domain, we can exploit the fact that our
signaling problem has a combinatorial solution. This is the basis for
the $(1-1/e)$-approximation algorithm of \cite{DIR14} for
$(n,M,r,k)$-second price signaling, which we outline next.
This algorithm reduces the signaling problem to submodular optimization,\footnote{Recall that 
  a function $f: 2^\Omega \rightarrow \mathbb{R}$ on a ground set $\Omega$
  is {\em submodular} if it satisfies 
$f(X) + f(Y) \geq f(X \union Y) + f(X \intersect Y)$ for every $X,Y \sse \Omega$.}
and is computationally efficient when $r$, the support size of the valuation distribution $\D$, is small.
Using $\V^1,\ldots,\V^r$ and $\rho_1,\ldots,\rho_r$ to denote the support of $\D$ and the
corresponding probabilities, respectively, consider a deterministic
signaling scheme in partition form $\Theta_1,\ldots,\Theta_k$. Such a
signaling scheme induces $kr$ subgames, one for each pair
$(\Theta_\sigma,\V^t)$. In each such subgame, there is a unique
winning player of the auction --- the player $i$ with maximum
posterior value \[\frac{1}{\lambda(\Theta_\sigma)} \sum_{\theta \in \Theta_\sigma} \lambda(\theta) \V^t(i,\theta).\]
The social welfare, therefore, is given by
\begin{equation}
  \label{eq:auction_welfare}
  \sum_{t=1}^r \sum_{\sigma=1}^k \max_{i\in [n]}  \sum_{\theta \in \Theta_\sigma} \rho_t \lambda(\theta) \V^t(i,\theta).
\end{equation}

The key observation behind the algorithm of \cite{DIR14} is that, given a ``guess'' for the winner of the auction in each of these $kr$ subgames, the optimal signaling scheme $(\Theta_1,\ldots,\Theta_k)$ can be recovered in $\poly(n,M,r)$ time. More generally, given a list $w_1,\ldots,w_k$, where $w_\sigma \in [n]^r$ is the \emph{winner tuple} for signal $\sigma$, we can write the expected social welfare of a deterministic signaling scheme $\Theta_1,\ldots,\Theta_k$ under the assumption that  $w_\sigma(t)$ wins in the event $(\sigma,t)$.
\begin{equation}
  \label{eq:auction_welfare_guesses}
\textit{welfare}(\Theta_1,\ldots,\Theta_k, w_1,\ldots,w_k) = \sum_{t=1}^r \sum_{\sigma=1}^k \sum_{\theta \in \Theta_\sigma} \rho_t \lambda(\theta) \V^t(w_\sigma(t)),\theta).
\end{equation}
It is now clear that the optimal scheme, assuming winner tuples $w_1,\ldots,w_k$, maps $\theta$ to the signal $\sigma$ maximizing $\sum_{t=1}^r \sum_{\sigma=1}^k  \rho_t \lambda(\theta) \V^t(w_\sigma(t), \theta).$ The resulting scheme has social welfare at least as given by the following expression, with equality holding for the optimal set of guesses $w$.

\begin{equation}
  \label{eq:welfare_tuples}
 \textit{welfare}(w_1,\ldots,w_k) = \sum_{\theta \in \Theta} \max_{\sigma \in [k]} \sum_{t=1}^r \rho_t \lambda(\theta) \V^t(w_\sigma(t)),\theta).
\end{equation}
We summarize this discussion by the following lemma. 
\begin{lemma}[\cite{DIR14}]\label{lem:recover_from_tuple}
  Consider an instance $(\lambda,\D)$ of $(n,M,r,k)$-second-price signaling. Given a list $w_1,\ldots,w_k \in [n]^r$ of winner tuples, a deterministic signaling scheme $\varphi: \Theta \to [k]$ with $\textit{welfare}(\varphi,\D,\lambda) \geq \textit{welfare}(w_1,\ldots,w_k)$ can be computed in $\poly(n,M,r)$ time. Moreover, there is a set of winner tuples $w^*_1,\ldots,w^*_k$ with $\textit{welfare}(w^*_1,\ldots,w^*_k)$ equal to the optimum welfare of a $k$-signal scheme.
\end{lemma}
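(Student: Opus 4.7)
The plan is to construct $\varphi$ from the given winner tuples by a straightforward greedy rule: for each $\theta \in \Theta$, assign $\varphi(\theta)$ to the signal $\sigma \in [k]$ maximizing the quantity $\sum_{t=1}^r \rho_t \lambda(\theta) \V^t(w_\sigma(t), \theta)$, breaking ties arbitrarily. This takes $O(nMrk) = \poly(n,M,r)$ time since $k \le M$ w.l.o.g. Let $\Theta_1, \ldots, \Theta_k$ denote the resulting partition of $\Theta$, and write $\sigma(\theta) = \varphi(\theta)$.

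To verify the welfare bound, I would use equation \eqref{eq:auction_welfare} to express the welfare of $\varphi$ as $\sum_t \sum_\sigma \max_i \sum_{\theta \in \Theta_\sigma} \rho_t \lambda(\theta) \V^t(i, \theta)$, then weaken the inner maximum by plugging in $i = w_\sigma(t)$, obtaining
\[
\textit{welfare}(\varphi, \D, \lambda) \;\geq\; \sum_{\sigma=1}^k \sum_{\theta \in \Theta_\sigma} \sum_{t=1}^r \rho_t \lambda(\theta) \V^t(w_\sigma(t), \theta) \;=\; \sum_{\theta \in \Theta} \sum_{t=1}^r \rho_t \lambda(\theta) \V^t(w_{\sigma(\theta)}(t), \theta).
\]
By the choice of $\sigma(\theta)$, the inner sum equals $\max_{\sigma} \sum_t \rho_t \lambda(\theta) \V^t(w_\sigma(t), \theta)$, so the right-hand side is exactly $\textit{welfare}(w_1,\ldots,w_k)$ as defined in \eqref{eq:welfare_tuples}. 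This completes the algorithmic claim.

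For the \emph{moreover} clause, I would start from any optimal deterministic $k$-signal scheme, given in partition form as $\Theta_1^*, \ldots, \Theta_k^*$ (existence of a deterministic optimum is furnished by the lemma of \cite{DIR14} cited just above). Define $w_\sigma^*(t) \in [n]$ to be the true winner of the induced subgame $(\Theta_\sigma^*, \V^t)$, namely $w_\sigma^*(t) \in \argmax_i \sum_{\theta \in \Theta_\sigma^*} \lambda(\theta) \V^t(i, \theta)$. Then $\textit{welfare}$(optimum scheme) is exactly $\sum_t \sum_\sigma \sum_{\theta \in \Theta_\sigma^*} \rho_t \lambda(\theta) \V^t(w_\sigma^*(t), \theta)$, which by the same reordering as above is at most $\sum_\theta \max_\sigma \sum_t \rho_t \lambda(\theta) \V^t(w_\sigma^*(t), \theta) = \textit{welfare}(w_1^*, \ldots, w_k^*)$. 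Combined with the first part of the lemma (applied to $w^*$), which produces a scheme of welfare at least $\textit{welfare}(w_1^*,\ldots,w_k^*)$ and hence at most the optimum, we conclude $\textit{welfare}(w_1^*, \ldots, w_k^*)$ equals the optimum welfare.

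I do not anticipate a serious obstacle here: the whole content is the observation that the ``$\max_i$'' inside the welfare formula can be lower-bounded by a fixed guess, and the ``$\max_\sigma$'' in the winner-tuple formula can be matched by routing each $\theta$ to its best signal. The only mild subtlety is ensuring the moreover claim gives equality rather than just an inequality, which is handled by sandwiching between the first part and the fact that $w^*$ was extracted from a genuine optimal scheme.
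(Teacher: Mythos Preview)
Your proposal is correct and follows essentially the same approach as the paper: the discussion preceding the lemma in Section~\ref{sec:auctionbg} already prescribes the greedy rule $\varphi(\theta) = \argmax_\sigma \sum_t \rho_t \lambda(\theta) \V^t(w_\sigma(t),\theta)$ and notes that the resulting welfare is at least \eqref{eq:welfare_tuples}, with equality for the optimal guesses. Your write-up simply fills in the straightforward verification of these inequalities and the sandwiching argument for the ``moreover'' clause, which the paper leaves implicit.
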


The above lemma naturally leads to a set function maximization problem subject to a cardinality constraint of $k$.  By symmetry, we can think of a list of winner tuples $w_1,\ldots,w_k$ equivalently as an (unordered) set $W= \set{w_1,\ldots,w_k}$, assigning an order arbitrarily. Assuming winning tuples $W$, the signaling scheme of Lemma \ref{lem:recover_from_tuple} has welfare at least
\begin{equation}
  \label{eq:welfare_setfunction}
 \textit{welfare}(W) = \sum_{\theta \in \Theta} \max_{w \in W} \sum_{t=1}^r \rho_t \lambda(\theta) \V^t(w(t),\theta),
\end{equation}
with equality holding at optimality. The above function naturally extends to all  $W \sse [n]^r$, and is monotone nondecreasing and submodular. To see this, note that it can be written in the form $\sum_{\theta} \max_{w \in W} g_\theta(w)$ --- the function $\max_{w \in W} g_\theta(w)$ is monotone and submodular for each fixed $\theta$, and therefore the sum of such functions is as well. Thus maximizing the social welfare subject to communication 
  constraint $k$ can be formulated as the following submodular maximization problem subject to a cardinality constraint:
\begin{eqnarray}\label{eqn:submodular}
\max_{|W|=k, W \subseteq [n]^r} \textit{welfare}(W).
\end{eqnarray}
The problem of maximizing a monotone
  submodular function subject to cardinality constraint is NP-hard in general, though
  can be approximated to within to a factor of $(1-1/e)$ using a simple greedy algorithm~\cite{nwf78}. The runtime of the algorithm is polynomial in the size of the ground set, as well as the time needed to evaluate the function. This leads to a deterministic $(1-1/e)$-approximation algorithm for $(n,M,k,r)$-second-price auction signaling which runs in time $\poly(n^r,M)$.

\begin{theorem}[\cite{DIR14}]
\label{thm:dir-main}
There is a deterministic $\poly(n^r,M)$-time $\left(1-\frac{1}{e}\right)$-approximation algorithm for welfare maximization in $(n,M,k,r)$-second-price auction signaling.
\end{theorem}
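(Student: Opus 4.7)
The plan is to assemble Theorem~\ref{thm:dir-main} directly from the machinery already laid out in Section~\ref{sec:auctionbg}. By Lemma~\ref{lem:recover_from_tuple}, the welfare-maximization problem reduces to computing a set $W\subseteq[n]^r$ with $|W|=k$ that maximizes $\textit{welfare}(W)=\sum_{\theta\in\Theta}\max_{w\in W}g_\theta(w)$, where $g_\theta(w)=\sum_{t=1}^{r}\rho_t\,\lambda(\theta)\,\V^t(w(t),\theta)$, because the lemma guarantees (i) a $\poly(n,M,r)$-time procedure to turn any such $W$ into a deterministic signaling scheme of welfare at least $\textit{welfare}(W)$, and (ii) the existence of an optimal $W^{*}$ whose value equals the welfare of the best $k$-signal scheme.

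Next I would verify the structural claim that $\textit{welfare}:2^{[n]^r}\to\mathbb{R}_{\geq 0}$ is monotone and submodular on the ground set $\Omega=[n]^r$. Monotonicity is immediate since enlarging $W$ can only increase each pointwise maximum. For submodularity it suffices to show each summand $h_\theta(W):=\max_{w\in W}g_\theta(w)$ is submodular: for any $W\subseteq W'$ and any $w^{*}\in\Omega\setminus W'$, the marginal increase $h_\theta(W\cup\{w^{*}\})-h_\theta(W)=\max\{0,\,g_\theta(w^{*})-h_\theta(W)\}$ is nonincreasing in $W$, and submodularity is closed under nonnegative sums.

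Given submodularity and monotonicity, I would invoke the standard greedy algorithm of Nemhauser--Wolsey--Fisher \cite{nwf78} on the cardinality-constrained maximization problem \eqref{eqn:submodular}, which yields a $(1-1/e)$-approximate solution $\widehat{W}$ in time polynomial in $|\Omega|=n^r$ and in the cost of a single evaluation of $\textit{welfare}$. Each evaluation takes $O(n^r\cdot r\cdot M)$ time by direct computation from the definition, and the greedy algorithm performs $O(k\cdot n^r)$ evaluations; thus the entire greedy phase runs in $\poly(n^r,M)$ time (noting $k\leq n^r$ is trivially bounded). Finally I would feed $\widehat{W}$ into the procedure from Lemma~\ref{lem:recover_from_tuple} to obtain, in $\poly(n,M,r)$ additional time, a deterministic signaling scheme of welfare at least $\textit{welfare}(\widehat{W})\geq (1-1/e)\cdot\textit{welfare}(W^{*})=(1-1/e)\cdot\mathrm{OPT}$.

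There is no real obstacle here beyond bookkeeping: the crux of the argument is the submodularity observation, and once that is in place the approximation guarantee and runtime follow from plugging the known greedy algorithm into the reduction already established. The only mildly delicate point is the runtime accounting, where one must recognize that treating $[n]^r$ as the ground set is what forces the $\poly(n^r,M)$ dependence --- this exponential-in-$r$ blowup is precisely what motivates the quasi-polynomial meshing developed later in Section~\ref{sec:auction} for general (large-$r$) instances.
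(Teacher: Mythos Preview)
Your proposal is correct and follows essentially the same approach as the paper: the discussion in Section~\ref{sec:auctionbg} leading up to Theorem~\ref{thm:dir-main} \emph{is} the proof, reducing via Lemma~\ref{lem:recover_from_tuple} to maximizing the monotone submodular function $\textit{welfare}(W)$ over $W\subseteq [n]^r$ with $|W|=k$ and then invoking the greedy algorithm of~\cite{nwf78}. Your submodularity verification and runtime bookkeeping are slightly more explicit than the paper's, but the argument is identical in substance.
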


Finally, we note that the above approximation guarantee is optimal,
assuming $P \neq NP$, even when $r=1$ \cite{DIR14}.
Thus, 
our quasi-polynomial-time algorithms necessarily must yield no better than
a $(1-1/e)$-approximation, unless NP-complete problems can be solved
in quasi-polynomial time.

\subsection{Quasi-Polynomial Algorithm for Auction Signaling}
\label{sec:auctionquasi}
The exponential dependence on $r$ in Theorem \ref{thm:dir-main} limits its applicability. In this section, we present an approximation algorithm for the $(M,n,r,k)$-signaling problem with runtime exponential in $\log (n r)$ instead --- quasi-polynomial in the size of the explicit input representation. The algorithm achieves the same multiplicative approximation guarantee of $(1-1/e)$, modulo an additive loss that can be made arbitrarily small. As a consequence, we get a randomized quasi-polynomial-time algorithm with a similar guarantee when valuation distributions are given by a sampling oracle. 

\begin{theorem}
\label{thm:auction}
Given an instance of $(n,M,k,r)$-second-price signaling in the explicit model, and an approximation parameter $\epsilon > 0$, there is a deterministic algorithm which runs in $M^{O({\log(nr)}/{\epsilon^2})}$-time, and computes a signaling scheme with $k$ signals and expected social welfare $(1-1/e)(OPT-\epsilon)$, where $OPT$ denotes the optimal welfare of a $k$-signal scheme.
\end{theorem}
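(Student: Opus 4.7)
My plan is to reduce the problem to submodular maximization over a quasi-polynomial-size ground set of ``empirical'' winner tuples, and then invoke Lemma~\ref{lem:recover_from_tuple} together with the standard Nemhauser-Wolsey-Fisher greedy algorithm. The essential observation is that every optimal winner tuple is, up to an additive $\epsilon$, realized as the winner rule derived from averaging valuations over a small multiset of states of nature. This lets us replace the exponential-in-$r$ ground set $[n]^r$ used in Section~\ref{sec:auctionbg} with a quasi-polynomial-sized ``dictionary'' $\mathcal{W}$ that forms our meshing scheme in the auction setting.

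The first step is to construct this dictionary. Setting $s = c \log(nr)/\epsilon^2$ for a suitable constant $c$, for each size-$s$ multiset $S$ of $[M]$ define the empirical winner tuple $w_S \in [n]^r$ by $w_S(t) = \arg\max_{i \in [n]} \sum_{\theta \in S} \V^t(i,\theta)$, and let $\mathcal{W} = \{w_S : |S|=s\}$. Since there are at most $\binom{M+s-1}{s} \leq M^{O(s)}$ such multisets, $|\mathcal{W}| \leq M^{O(\log(nr)/\epsilon^2)}$, and $\mathcal{W}$ can be enumerated within this time.

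The second step is the key approximation lemma: there exists $W^* \subseteq \mathcal{W}$ with $|W^*| = k$ and $\textit{welfare}(W^*) \geq OPT - \epsilon$. I prove this via a probabilistic argument applied to the optimal deterministic signaling scheme $(\Theta_1^*,\ldots,\Theta_k^*)$ (which exists by the determinism lemma of Section~\ref{sec:auctionbg}). Let $\mu_\sigma$ denote the posterior induced by $\Theta_\sigma^*$ and $w_\sigma^*(t) = \arg\max_i \Ex_{\theta \sim \mu_\sigma}[\V^t(i,\theta)]$ the associated optimal winner tuples. For each $\sigma$ independently, draw a multiset $S_\sigma$ of size $s$ i.i.d.\ from $\mu_\sigma$; by Hoeffding's inequality plus a union bound over the $nr$ pairs $(i,t)$, the empirical averages $\Ex_{\hat\mu_\sigma}[\V^t(i,\theta)]$ lie within $\epsilon/4$ of their true values for all $(i,t)$ with positive probability when $c$ is large enough. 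On this event, a standard $\arg\max$ argument gives $\Ex_{\mu_\sigma}[\V^t(w_{S_\sigma}(t),\theta)] \geq \Ex_{\mu_\sigma}[\V^t(w_\sigma^*(t),\theta)] - \epsilon/2$. Because $\mu_\sigma$ is supported on $[M]$, the witnessing multiset $S_\sigma$ is a multiset of $[M]$, so $w_{S_\sigma} \in \mathcal{W}$. Substituting these empirical tuples back into the original partition yields a signaling scheme of welfare at least $OPT - \epsilon/2$, and since $\textit{welfare}(\{w_{S_1},\ldots,w_{S_k}\})$ equals the welfare achievable by the \emph{best} signaling scheme consistent with these tuples (Lemma~\ref{lem:recover_from_tuple}), this also lower-bounds $\textit{welfare}(W^*)$.

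The third step is the algorithmic part. As noted in Section~\ref{sec:auctionbg}, $\textit{welfare}$ is monotone submodular as a set function on subsets of $[n]^r$, hence also on subsets of $\mathcal{W}$. Applying the Nemhauser-Wolsey-Fisher greedy algorithm on the ground set $\mathcal{W}$ with cardinality constraint $k$ yields $W$ with $\textit{welfare}(W) \geq (1-1/e)(OPT - \epsilon/2) \geq (1-1/e)(OPT - \epsilon)$; Lemma~\ref{lem:recover_from_tuple} then converts $W$ into a deterministic $k$-signal signaling scheme with the same welfare. The runtime is dominated by the greedy loop, which performs $k \leq M$ iterations, each scanning $|\mathcal{W}|$ candidates and evaluating $\textit{welfare}$ in $O(|\mathcal{W}| \cdot M \cdot r)$ time, for a total of $M^{O(\log(nr)/\epsilon^2)}$. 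The main subtle point I anticipate is a careful accounting of the sample bound: because we only need existence of a good multiset for each $\sigma$ (not a simultaneous high-probability guarantee), no extra $\log k$ factor is needed, keeping $s = O(\log(nr)/\epsilon^2)$ and the dictionary size $M^{O(\log(nr)/\epsilon^2)}$ as desired.
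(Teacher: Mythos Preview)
Your proposal is correct and follows essentially the same approach as the paper: both construct the dictionary $\mathcal{W}$ (the paper calls it $\N_\epsilon$) as the set of winner tuples induced by all size-$O(\log(nr)/\epsilon^2)$ multisets of states of nature, prove via Hoeffding plus a union bound that the optimal deterministic partition admits small ``witness'' multisets whose empirical winner tuples lose at most $\epsilon$ in welfare, and then run the greedy $(1-1/e)$-approximation for monotone submodular maximization over this reduced ground set before invoking Lemma~\ref{lem:recover_from_tuple}. The only cosmetic differences are your choice of $\epsilon/4$ versus the paper's $\epsilon/2$ in the concentration step, and your slightly more explicit runtime accounting.
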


\begin{corollary}
\label{cor:auction}
Given an instance of $(n,M,k,r)$-second-price signaling where the valuation distribution is given via a sampling oracle, and parameters $\epsilon,\delta > 0$, there is a Monte Carlo algorithm which runs in time $M^{O({\log(n M \delta^{-1} \epsilon^{-2})}/{\eps^2})}$, and with success probability $1-\delta$ outputs a signaling scheme with $k$ signals and expected social welfare $(1-1/e)(OPT-2\epsilon)$. 
\end{corollary}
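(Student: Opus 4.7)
The plan is to reduce the sampling-oracle setting to the explicit setting of Theorem \ref{thm:auction} by forming an empirical valuation distribution. Specifically, I would draw $N$ independent valuation matrices $\V^1,\ldots,\V^N$ from the oracle for $\D$ and let $\hat{\D}$ be the empirical distribution uniform over these samples. Since an optimal signaling scheme may be taken deterministic (Section \ref{sec:auctionbg}), it suffices to consider deterministic $k$-signal schemes $\varphi: \Theta \to [k]$, and for any such $\varphi$ the quantity $\textit{welfare}(\varphi,\lambda,\hat{\D}) = \tfrac{1}{N}\sum_{j=1}^N \textit{welfare}(\varphi,\lambda,\V^j)$ is an empirical average of numbers in $[0,1]$. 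I would then invoke the algorithm of Theorem \ref{thm:auction} on the explicit instance $(\lambda,\hat{\D})$ with $r = N$ and accuracy parameter $\eps$, and output the resulting scheme $\varphi^{\mathit{emp}}$.

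The key ingredient is a uniform concentration bound: with probability at least $1-\delta$, $|\textit{welfare}(\varphi,\lambda,\hat{\D}) - \textit{welfare}(\varphi,\lambda,\D)| \leq \eps$ simultaneously for \emph{every} deterministic $k$-signal scheme $\varphi$. For each fixed $\varphi$, Hoeffding's inequality bounds the failure probability by $2 e^{-2 N \eps^2}$. A union bound over the at most $k^M \leq M^M$ deterministic $k$-signal schemes then suffices provided $N = \Omega((M \log M + \log(1/\delta))/\eps^2)$, which is polynomial in $M$, $\eps^{-1}$, and $\log \delta^{-1}$.

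Conditioning on this high-probability event, the optimum welfare under $\hat{\D}$ satisfies $OPT_{\hat{\D}} \geq OPT - \eps$ (by applying concentration to the optimal scheme under $\D$), so Theorem \ref{thm:auction} guarantees $\textit{welfare}(\varphi^{\mathit{emp}},\lambda,\hat{\D}) \geq (1-1/e)(OPT - 2\eps)$; one more application of concentration yields $\textit{welfare}(\varphi^{\mathit{emp}},\lambda,\D) \geq (1-1/e)(OPT - 2\eps) - \eps$. Rescaling $\eps$ by a fixed constant recovers the form $(1-1/e)(OPT - 2\eps)$ claimed in the corollary. The runtime is dominated by that of Theorem \ref{thm:auction} with $r = N$, namely $M^{O(\log(nN)/\eps^2)} = M^{O(\log(n M \delta^{-1} \eps^{-2})/\eps^2)}$, matching the stated bound. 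The main obstacle is the uniform convergence step: a pointwise Hoeffding bound on a single $\varphi$ would not suffice, because $\varphi^{\mathit{emp}}$ is itself a (data-dependent) function of the samples, so one is forced to union-bound over all $k^M$ deterministic schemes, and the resulting $M \log M$ factor in the required sample size $N$ is precisely what dictates the quasi-polynomial runtime.
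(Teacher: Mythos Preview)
Your proposal is correct and follows essentially the same approach as the paper: sample a polynomial number of valuation matrices to form an empirical distribution, apply Hoeffding's inequality per deterministic scheme and union-bound over the at most $k^M$ such schemes, then invoke Theorem~\ref{thm:auction} on the resulting explicit instance. The paper's version uses threshold $\eps/2$ in the Hoeffding step (so no final rescaling is needed) and writes the sample count as $\Theta((M\log k + \log(\delta^{-1}))/\eps^2)$, but these are cosmetic differences.
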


Corollary \ref{cor:auction} follows from Theorem \ref{thm:auction} by using standard Monte Carlo sampling and convergence arguments, exploiting the fact that there are at most $k^M$ deterministic signaling schemes. We include a self contained proof in Section \ref{subsec:auctioncor}.
Theorem \ref{thm:auction}, on the other hand, follows from the following Lemma,
which we will prove in Section \ref{pf:auctionlemma}.

\begin{lemma}\label{lem:auction}
Given an instance of the  $(n,M,k,r)$-second-price signaling problem,  and a parameter $\epsilon>0$,  there is a family of winner tuples  $\N_\epsilon \subseteq [n]^r$ with $|\N_\epsilon| = M^{O({\log(nr)}/{\epsilon^2})}$, satisfying 
\begin{eqnarray}\label{eqn:approxSubmodular}
\max_{|W|=k, W \subseteq {\cal N}_\epsilon} \textit{welfare}(W) 
 \ge \max_{|W|=k, W \subseteq [n]^r} \textit{welfare}(W) - \epsilon.
\end{eqnarray}
Moreover, $\N_\epsilon$ can be enumerated in time $M^{O({\log(nr)}/{\epsilon^2})}$.
\end{lemma}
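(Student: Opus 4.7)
The plan is to build $\N_\epsilon$ by an LMM-style discretization applied along the state-of-nature axis $\Theta$, rather than along the valuation-matrix axis $[r]$. Fix $\ell$ to be a sufficiently large $O(\log(nr)/\epsilon^2)$. For every sequence $T=(\theta_1,\ldots,\theta_\ell)\in\Theta^\ell$, define the winner tuple $w_T\in[n]^r$ coordinate-wise by
\[
w_T(t)\;=\;\arg\max_{i\in[n]}\sum_{j=1}^\ell \V^t(i,\theta_j),
\]
with ties broken arbitrarily, and set $\N_\epsilon=\{w_T:T\in\Theta^\ell\}$. Since $|\Theta^\ell|=M^\ell$, we get $|\N_\epsilon|\le M^{O(\log(nr)/\epsilon^2)}$, and each $w_T$ is computed from $T$ in time polynomial in $n,r,\ell$, so the whole family is enumerable in the claimed time.

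For the approximation guarantee, fix an optimizer $W^*=\{w_1^*,\ldots,w_k^*\}$ of the right-hand side of \eqref{eqn:approxSubmodular}. Writing $g_\theta(w)=\sum_t\rho_t \V^t(w(t),\theta)$, let $\Theta_\sigma=\{\theta:\sigma=\arg\max_{\sigma'}g_\theta(w^*_{\sigma'})\}$, which partitions $\Theta$ after tie-breaking, so that $\textit{welfare}(W^*)=\sum_\sigma\sum_{\theta\in\Theta_\sigma}\lambda(\theta)\,g_\theta(w_\sigma^*)$. The plan is to show that for each $\sigma$ with $\lambda(\Theta_\sigma)>0$ there exists a sequence $T_\sigma\in\Theta_\sigma^\ell\subseteq\Theta^\ell$ such that
\[
\sum_{\theta\in\Theta_\sigma}\lambda(\theta)\,g_\theta(w_{T_\sigma})\;\ge\;\sum_{\theta\in\Theta_\sigma}\lambda(\theta)\,g_\theta(w_\sigma^*)\;-\;\epsilon\,\lambda(\Theta_\sigma).
\]
To exhibit such a $T_\sigma$, draw $\theta_1,\ldots,\theta_\ell$ i.i.d.\ from the conditional distribution $\lambda|_{\Theta_\sigma}$. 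By Hoeffding's inequality combined with a union bound over all $nr$ pairs $(t,i)\in[r]\times[n]$, the event that $\bigl|\tfrac1\ell\sum_j \V^t(i,\theta_j)-\mathbb{E}_{\theta\sim\lambda|_{\Theta_\sigma}}[\V^t(i,\theta)]\bigr|\le\epsilon/2$ holds for every $(t,i)$ simultaneously has positive probability once $\ell$ is a large enough constant times $\log(nr)/\epsilon^2$; fix any realization $T_\sigma$ achieving this uniform concentration.

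Applying this two-sided concentration once to $i=w_{T_\sigma}(t)$ and once to $i=w_\sigma^*(t)$, combined with the defining argmax inequality $\tfrac1\ell\sum_j\V^t(w_{T_\sigma}(t),\theta_j)\ge\tfrac1\ell\sum_j\V^t(w_\sigma^*(t),\theta_j)$, yields $\mathbb{E}_{\theta\sim\lambda|_{\Theta_\sigma}}[\V^t(w_{T_\sigma}(t),\theta)]\ge\mathbb{E}_{\theta\sim\lambda|_{\Theta_\sigma}}[\V^t(w_\sigma^*(t),\theta)]-\epsilon$ for every $t$. Multiplying by $\lambda(\Theta_\sigma)\rho_t$, summing over $t$, and then summing over $\sigma$ telescopes to $\sum_\sigma\sum_{\theta\in\Theta_\sigma}\lambda(\theta)g_\theta(w_{T_\sigma})\ge\textit{welfare}(W^*)-\epsilon$. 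Taking $W'=\{w_{T_\sigma}\}_\sigma\subseteq\N_\epsilon$ (padding to cardinality exactly $k$ by monotonicity of $\textit{welfare}$ if the $T_\sigma$ collide), the suboptimal assignment sending each $\theta\in\Theta_\sigma$ to $\sigma$ certifies $\textit{welfare}(W')\ge\sum_\sigma\sum_{\theta\in\Theta_\sigma}\lambda(\theta)g_\theta(w_{T_\sigma})$, which proves \eqref{eqn:approxSubmodular}.

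The main obstacle is identifying the right axis to sparsify. The naive move of applying LMM along the $r$-axis, replacing $\rho$ by a uniform sample of $\ell$ valuation matrices, modifies the submodular objective rather than yielding tuples in $[n]^r$, while brute-force enumeration of $[n]^r$ costs $n^r$. The key insight is that sparsification along the state axis $\Theta$ is the right move: although the optimal partition $\{\Theta_\sigma\}$ is unknown, enumerating all $\ell$-sequences from the full set $\Theta$ suffices, since a sample from any unknown $\lambda|_{\Theta_\sigma}$ is automatically an element of $\Theta^\ell$, and a single $\ell$-sample controls the bidder-wise argmax uniformly across all $(t,i)$ via Hoeffding plus union bound.
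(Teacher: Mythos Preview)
Your proof is correct and follows essentially the same approach as the paper: both construct $\N_\epsilon$ as the set of winner tuples induced by all multisets (equivalently, sequences) of $O(\log(nr)/\epsilon^2)$ states from $\Theta$, and both establish the approximation guarantee by sampling from the conditional $\lambda|_{\Theta_\sigma}$ and applying Hoeffding plus a union bound over the $nr$ pairs $(t,i)$. The only cosmetic difference is that the paper starts from an optimal deterministic signaling scheme $(\Theta_1,\ldots,\Theta_k)$ directly, whereas you first take an optimal tuple set $W^*$ and then recover the partition---these are equivalent by Lemma~\ref{lem:recover_from_tuple}.
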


Armed with Lemma \ref{lem:auction}, we complete the proof of Theorem \ref{thm:auction}. We follow the blueprint outlined in Section \ref{sec:auctionbg}, using $\N_\epsilon$ as our ground set of winner tuples in lieu of  $[n]^r$. Specifically, we use the greedy algorithm to approximately maximize $\textit{welfare}(W)$ over $W \sse \N_\eps$ with $|W|=k$. This leads to a signaling scheme with welfare at least $(1-1/e)(OPT-\eps)$. Given the reduced size of the ground set of our submodular function, the runtime is now $\poly(|\N_\eps|,n,M,r) = M^{O({\log(nr)}/{\epsilon^2})}$.

\subsection{Proof of Lemma \ref{lem:auction}}\label{pf:auctionlemma}
Given a multiset of item
configurations $Y \sse \Theta$, we distinguish the winner tuple
\emph{induced} by $Y$, which simply lists the winning player for each
$V^t \in supp(\D)$ assuming $\theta$ is drawn uniformly from $Y$.
Formally, the winner tuple $w(Y)=(w_1(Y),\ldots,w_r(Y))$ induced by
$Y$ is defined by $w_t(Y) = \argmax_{i=1}^n \Ex_{\theta \sim Y}[ \V^t(i,\theta)]$.  We then define our set $\N_\eps$ as follows.
\[\N_\eps = \set{ w(Y) : |Y| = \frac{2\ln(4nr)}{\epsilon^2}, Y \sse \Theta}\]

$\N_\eps$ is of the claimed size. It remains
to show that considering only winner tuples in $\N_\eps$ in the
optimization problem $\max_{|W| = k} \textit{welfare}(W)$ results in a
loss of welfare no more than an additive $\epsilon$.  This hinges on
the following claim.

\begin{claim}
  Fix $\eps>0$. For each $X \sse \Theta$, there is multiset $Y \sse X$ with $|Y|= {2\ln(4nr)}/{\epsilon^2}$ such that,
\[\left|\Ex_{\theta \sim \lambda} [\V^t(i,\theta) | \theta \in X ] - \Ex_{\theta \sim Y} [\V^t(i,\theta)] \right| \leq \eps/2\]
for every $t=1,\ldots,r$ and $i =1,\ldots,n$.
\end{claim}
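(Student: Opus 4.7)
My plan is to prove the claim by the probabilistic method, following the same sampling idea that underlies Lipton--Markakis--Mehta. Let $m = 2\ln(4nr)/\epsilon^2$, and let $\mu_X$ denote the distribution of $\theta$ conditioned on $\theta \in X$, i.e., $\mu_X(\theta) \propto \lambda(\theta) \cdot \mathbf{1}[\theta \in X]$. I will draw a random multiset $Y = \{\theta^{(1)}, \dots, \theta^{(m)}\}$ by taking $m$ i.i.d.\ samples from $\mu_X$, and then show that with positive probability this multiset simultaneously approximates $\Ex_{\theta \sim \mu_X}[V^t(i,\theta)]$ to within $\epsilon/2$ for all $nr$ pairs $(t,i)$. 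Existence of the required $Y$ then follows.

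For each fixed pair $(t,i)$, the random variables $V^t(i,\theta^{(1)}), \dots, V^t(i,\theta^{(m)})$ are i.i.d.\ and supported in $[0,1]$, with common mean $\Ex_{\theta \sim \mu_X}[V^t(i,\theta)]$. The empirical average is exactly $\Ex_{\theta \sim Y}[V^t(i,\theta)]$ (where $Y$ is interpreted as the uniform distribution over its $m$ elements). Hoeffding's inequality therefore gives
\[
\Pr\!\left[\,\left|\Ex_{\theta \sim Y}[V^t(i,\theta)] - \Ex_{\theta \sim \mu_X}[V^t(i,\theta)]\right| > \epsilon/2\,\right] \;\leq\; 2\exp\!\left(-2m(\epsilon/2)^2\right) \;=\; 2\exp(-\ln(4nr)) \;=\; \frac{1}{2nr}.
\]

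Taking a union bound over all $nr$ choices of $(t,i)$, the probability that some pair exceeds the $\epsilon/2$ deviation is at most $1/2 < 1$. Hence with positive probability the sampled multiset $Y$ satisfies the claimed inequality for every $(t,i)$ simultaneously, and such a $Y$ exists. Since every realization is a multiset of size $m$ drawn from $X$, we have $Y \sse X$ as required.

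The step I expect to be routine is the Hoeffding/union-bound calculation; there is no real obstacle here beyond choosing $m$ to cancel the $\ln(4nr)$ factor against the $\epsilon^2/2$ exponent. The more substantive use of the claim happens afterward, when it is invoked (for each of the $k$ signal classes $X = \Theta_\sigma$ of an optimal deterministic scheme) to replace the true posterior means by means over a size-$m$ multiset, thereby showing that the induced winner tuples lie in $\N_\epsilon$ and that the resulting welfare loss is at most $\epsilon$; but the claim itself requires only the probabilistic argument above.
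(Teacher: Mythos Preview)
Your proof is correct and follows exactly the same approach as the paper: sample $|Y|=2\ln(4nr)/\eps^2$ points i.i.d.\ from $\lambda|X$, apply Hoeffding to each of the $nr$ pairs, and union bound to get success probability at least $1/2$. The paper's own proof is a two-sentence sketch invoking ``standard application of the Hoeffding bound and the union bound''; you have simply written out those details.
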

\begin{proof}
  Consider taking ${2\ln(4nr)}/{\epsilon^2}$ samples $Y$ from the distribution $\lambda | X$. Standard application of the Hoeffding bound and the union bound shows that $Y$ satisfies the desired guarantee with probability at least ${1}/{2} > 0$. The existential result then follows.
\end{proof}

As a consequence, for every deterministic signaling scheme
$(\Theta_1,\ldots,\Theta_k)$, we can choose a tuple of multisets
$(Y_1,\ldots,Y_k)$ with $Y_\sigma \sse \Theta_\sigma$ and $|Y_\sigma|
= {2\ln(4nr)}/{\epsilon^2}$ for each $\sigma \in [k]$, so that the
empirical distribution $Y_\sigma$ always approximates, to within an
additive $\eps/2$, the posterior expected value of every player in
subgame $\sigma$. Hence, using the empirical distribution $Y_\sigma$
to determine the winner of each signal $\Theta_\sigma$ preserves the
social welfare in the corresponding subgame up to an additive
$\epsilon$. When $(\Theta_1,\ldots,\Theta_k)$ is an optimal scheme,
this implies that $\textit{welfare}(w(Y_1),\ldots,w(Y_k)) \geq OPT -
\eps$, as needed.

\subsection{Proof of Corollary \ref{cor:auction}}
\label{subsec:auctioncor}

We fix an arbitrary deterministic signaling scheme $\varphi$ (i.e. partition on items), and consider the randomness in valuation matrices.
Let $f(\varphi,V)$ be the expected welfare induced by $\varphi$ when the valuation matrix is $V$, and let $f(\varphi,{\cal D})$ be the expected welfare induced by $\varphi$ when the valuation is drawn from ${\cal D}$.
By definition, $f(\varphi,{\cal D}) = E_{V \sim {\cal D}}[f(\varphi,V)] \in [0,1]$.
In order to estimate $f(\varphi,{\cal D})$, we take $r$ samples $V_1,\ldots,V_r$ from ${\cal D}$ and apply our algorithm in Theorem \ref{thm:auction} as if the valuation distribution is uniform on $V_1,\ldots,V_r$.
We apply Hoeffding's inequality to bound the probability that welfare estimated from $V_1,\ldots,V_t$ is far from its expectation.
\[
Pr\left(\left|\frac{1}{r} \sum_{j=1}^r f(\varphi,V_j) - f(\varphi,{\cal D})\right| \ge \epsilon / 2\right) \le 2e^{-r \epsilon^2/2},
\]
so by setting $r = \Theta( (M \log k + \log(\delta^{-1})) / \epsilon^2)$, we get this probability to be at most $\delta k^{-M}$.
Because there are at most $k^M$ deterministic signaling schemes, by union bound, we preserve the welfare up to an additive error $\epsilon/2$ for all of them. We conclude that, in this situation, our signaling scheme computed as in Theorem \ref{thm:auction} has expected welfare at least $(1-1/e)(OPT-2\epsilon)$.


\bibliographystyle{alpha}
\bibliography{agt,signal}
\end{document}